\newcommand{\df}[1]{{\it #1}}
\newcommand{\Oh}{{\ensuremath{\mathcal{O}}}}
\newcommand{\NP}{{\ensuremath{\mathcal{NP}}}}
\newcommand{\Qh}{{\ensuremath{\mathcal{Q}}}}
\newcommand{\ver}{arxiv}
\newcommand{\arxapp}[2]{\ifthenelse{\equal{\ver}{conf}}{#2}{#1}}
\let\doendproof\endproof
\renewcommand\endproof{~\hfill$\qed$\doendproof}
\newcommand{\myparagraph}[1]{\medskip\noindent\textbf{#1}}
\newcommand{\rephrase}[3]{\medskip\noindent\textbf{#1~#2.}\hspace{0.5ex}\emph{#3}}
\newcounter{casecounter}[section]
\newenvironment{csp}[1][]{\par\smallskip
   \noindent \textbf{Case~P.#1}: }{\smallskip}
\newtheorem{prop}{Proposition}
\begin{document}

% ============================================================
\title{Queue Layouts of Planar 3-Trees%
\thanks{This work is supported by the DFG grant Ka812/17-1 and DAAD project~57419183}}

\author{Jawaherul~Md.~Alam\inst{1} \and 
Michael~A.~Bekos\inst{2} \and 
Martin~Gronemann\inst{3} \and
Michael~Kaufmann\inst{2} \and 
Sergey~Pupyrev\inst{1}}
\authorrunning{J.~Md.~Alam, M.~A.~Bekos, M.~Gronemann, M.~Kaufmann, S.~Pupyrev}

\institute{
Dep. of Computer Science, University of Arizona, Tucson, USA
\\\email{\{jawaherul,spupyrev\}@gmail.com}
\and
Institut f{\"u}r Informatik, Universit{\"a}t T{\"u}bingen, T{\"u}bingen, Germany
\\\email{\{bekos,mk\}@informatik.uni-tuebingen.de}
\and
Institut f\"ur Informatik, Universit\"at zu K\"oln, K\"oln, Germany
\\\email{gronemann@informatik.uni-koeln.de}
}
% ============================================================

\maketitle

\begin{abstract}
A \emph{queue layout} of a graph $G$ consists of a \emph{linear order} of the vertices of $G$ and a partition of the edges of $G$ into \emph{queues}, so that no two independent edges of the same queue are nested. The \emph{queue number} of~$G$ is the minimum number of queues required by any queue layout~of~$G$.

In this paper, we continue the study of the queue number of planar $3$-trees.
%, which form a well-studied subclass of planar graphs. 
As opposed to general planar graphs, whose queue number is not known to be bounded by a constant, the queue number of planar $3$-trees has been shown to be at most seven. In this work, we improve the upper bound to five. We also show that there exist planar $3$-trees, whose queue number is at least four; this is the first example of a planar graph
with queue number greater than three.
\end{abstract}

\section{Introduction}
\label{sec:introduction}
% !TEX root = paper.tex

In a \df{queue} layout~\cite{HR92}, the vertices of a graph are restricted to a line and the edges are drawn at different half-planes delimited by this line, called \df{queues}. The task is to find a linear order of the vertices along the underlying line and a corresponding assignment of the edges of the graph to the queues, so that no two independent edges of the same queues are nested; see Fig.~\ref{fig:example}. Recall that two edges are called \df{independent} if they do not share an endvertex. The \df{queue number} of a graph is the smallest number of queues that are required by any queue layout of the graph. Note that queue layouts form the ``dual'' concept of \df{stack} layouts~\cite{Oll73}, which do not allow two edges of the same stack to cross. 

\begin{figure}[h]
	\centering
	\subfloat[\label{fig:goldner-harary-1}{}]
	{\includegraphics[scale=0.78,page=1]{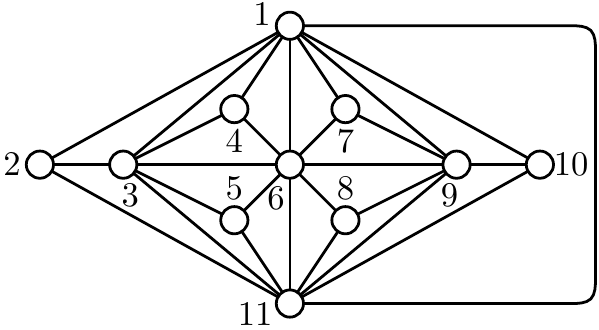}}
	\hfil
	\subfloat[\label{fig:goldner-harary-2}{}]
	{\includegraphics[scale=0.78,page=2]{goldner-harary}}
	\caption{%
	(a)~The Goldner-Harary planar $3$-tree, and	
	(b)~a $5$-queue layout of it produced by our algorithm, in which
	edges of different queues are colored differently.}
	\label{fig:example}
\end{figure}

Apart from the intriguing theoretical interest, queue layouts find applications in several domains~\cite{BCLR96,HLR92,PTT97,T72}. As a result, they have been studied extensively over the years~\cite{T72,RM95,HR92,BFP13,Pem92,DF15,DW05,H03,SS00,Wie17,Pup17}. An important open problem in this area is whether the queue number of {\it planar} graphs is bounded by a constant. A positive answer to this problem would have several important implications, e.g., (i)~that every $n$-vertex planar graph admits a $\Oh(1) \times \Oh(1) \times \Oh(n)$ straight-line grid drawing~\cite{Wood02}, (ii)~that every Hamiltonian bipartite planar graph admits a $2$-layer drawing and an edge-coloring of bounded size, such that edges of the same color do not cross~\cite{DW03}, and (iii)~that the queue number of $k$-planar graphs is also bounded by a constant~\cite{DW05}. The best-known upper bound is due to Dujmovi{\'c}~\cite{Duj15}, who showed that the queue number of an $n$-vertex planar graph is at most $\Oh(\log n)$ (improving upon an earlier bound by Di Battista et al.~\cite{BFP13}). 

It is worth noting that many subclasses of planar graphs have bounded queue number. Every tree has queue number one~\cite{HR92}, outerplanar graphs have queue number at most two~\cite{HLR92}, and series-parallel graphs have queue number at most three~\cite{RM95}.
Surprisingly, planar 3-trees have queue number at most seven~\cite{Wie17}, although they were conjectured  
to have super-constant queue number by Pemmaraju~\cite{Pem92}.
As a matter of fact, every graph that admits a $1$-queue layout is planar with at most $2n-3$ edges; however, testing this property is \NP-complete~\cite{HLR92}; for a survey refer to~\cite{DW05}. 

\smallskip\noindent\textit{Our Contribution.} In Section~\ref{sec:upper-bound}, we improve the upper bound on the queue~number of planar $3$-trees from seven~\cite{Wie17} to five; 
recall that a planar $3$-tree is a triangulated plane graph $G$ with $n \geq 3$ vertices, such that $G$ is either a $3$-cycle, if $n=3$, or has a vertex whose deletion gives a planar $3$-tree with $n-1$ vertices, if $n > 3$.
In Section~\ref{sec:lower-bound}, we show that there exist planar $3$-trees, whose queue number is at least four, thus strengthening a corresponding result of Wiechert~\cite{Wie17} for general (that is, not necessarily planar) $3$-trees. 
We stress that our lower bound is also the best known for planar graphs.
Table~\ref{table:queues} puts our results in the context of existing bounds. 
%
%In Section~\ref{sec:tracks}, we discuss implications of our results to the closely-related \emph{track layouts}~\cite{DPW04}. 
%
We conclude in Section~\ref{sec:conclusions} with open problems.
%Our work
%leads to a number of interesting research questions, which we list in Section~\ref{sec:conclusions}.

\begin{table}[!htb]
	\centering
	\caption{Queue numbers of various subclasses of planar graphs}
	\label{table:queues}
	\medskip
	\begin{tabular}{p{2.5cm}|p{2cm}p{1.6cm}|p{1.6cm}p{1.6cm}|}
		\toprule
		& \multicolumn{2}{c|}{Upper bound} & \multicolumn{2}{c|}{Lower bound} \\
		\cmidrule(l{1ex}r{1ex}){2-3}
		\cmidrule(l{1ex}r{1ex}){4-5}
		Graph class & \multicolumn{1}{c}{Old} & \multicolumn{1}{c|}{New} & \multicolumn{1}{c}{Old} & \multicolumn{1}{c|}{New} \\
		\midrule
	    tree  
		& ~$1$~\cite{HR92} & & ~$1$~\cite{HR92} & \\
		
		outerplanar
		& ~$2$~\cite{HLR92} & & ~$2$~\cite{HR92} & \\
		
		series-parallel
		& ~$3$~\cite{RM95} & & ~$3$~\cite{Wie17} & \\
		
		planar 3-tree
		& ~$7$~\cite{Wie17} & $\mathbf{5}$~[Thm.~\ref{thm:upper-bound}] & ~$3$~\cite{Wie17} & $\mathbf{4}$ [Thm.~\ref{thm:lower-bound}] \\
		
		planar
		& ~$\Oh(\log n)$~\cite{Duj15} & & ~$3$~\cite{Wie17} & $\mathbf{4}$ [Thm.~\ref{thm:lower-bound}] \\
		\bottomrule
	\end{tabular}
\end{table}

\paragraph{Preliminaries.}
\label{sec:prel}
% !TEX root = paper.tex
For a pair of distinct vertices $u$ and $v$, we write $u \prec v$, if $u$ precedes $v$ in a linear order. We also write $[v_1, v_2, \ldots, v_k ]$ to denote that $v_i$ precedes $v_{i+1}$ for all $1 \leq i < k$. Assume that $F$ is a set of $k \geq 2$ independent edges $(s_i, t_i)$ with $s_i \prec t_i$, for all $1 \leq i \leq k$. If the linear order is $[s_1, \ldots, s_k, t_k, \ldots, t_1]$, then we say that $F$ is a \emph{$k$-rainbow}, while if the linear order is $[s_1, \ldots, s_k, t_1, \ldots, s_k]$,
we say that $F$ is a \emph{$k$-twist}. The edges of $F$ form a \emph{$k$-necklace}, if $[s_1, t_1, \ldots, s_k, t_k]$; see Fig.~\ref{fig:necklace-twist}. A preliminary result for queue layouts is the following.

\begin{lemma}[Heath and Rosenberg~\cite{HR92}]
A linear order of the vertices of a graph admits a $k$-queue layout if and only if there exists no $(k+1)$-rainbow.
\end{lemma}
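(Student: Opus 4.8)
The plan is to recast the queue condition as an antichain condition in a partial order on the edge set and then apply Mirsky's theorem (the dual of Dilworth's theorem). Fix the given linear order $\sigma$, and for an edge $e=(u,v)$ write it as the interval $[s_e,t_e]$ with $s_e\prec t_e$. First I would introduce the relation $e\sqsubset f$, meaning that $f$ \emph{encloses} $e$, i.e.\ $s_f\prec s_e$ and $t_e\prec t_f$. Since $\prec$ is a strict total order, $\sqsubset$ is irreflexive and transitive, hence a strict partial order on $E(G)$.

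The central step is to establish the structural equivalence: two edges are independent with one nested inside the other if and only if they are $\sqsubset$-comparable, whereas any two edges sharing an endpoint are $\sqsubset$-incomparable. One direction is immediate, since $e\sqsubset f$ forces $s_f\prec s_e\prec t_e\prec t_f$, so $e$ and $f$ share no endpoint and $e$ lies nested in $f$. For the other, I would observe that if $e,f$ are independent with one nested in the other then, after possibly swapping them, $s_f\prec s_e$ and $t_e\prec t_f$, i.e.\ $e\sqsubset f$; and if $e,f$ share an endpoint $w$, then $w$ is the smaller or the larger endpoint of both, which contradicts any of the strict inequalities $s_f\prec s_e$, $s_e\prec s_f$, $t_e\prec t_f$, $t_f\prec t_e$ needed for comparability. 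It follows that a set of edges can be placed in one common queue of $\sigma$ exactly when it is an antichain of $(E(G),\sqsubset)$, and that the $m$-element chains of $(E(G),\sqsubset)$ are precisely the $m$-rainbows of $\sigma$.

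Then I would finish by combining these observations. A $k$-queue layout with underlying order $\sigma$ is the same thing as a partition of $E(G)$ into $k$ antichains of $(E(G),\sqsubset)$. By Mirsky's theorem, such a partition exists if and only if $(E(G),\sqsubset)$ contains no chain of $k+1$ elements, i.e.\ no $(k+1)$-rainbow; for the ``if'' direction one can be explicit, assigning each edge $e$ to the queue indexed by the length of a longest $\sqsubset$-chain with top element $e$, a value in $\{1,\dots,k\}$ that strictly increases along $\sqsubset$ and hence differs on every comparable pair. (The ``only if'' direction alone is just the pigeonhole principle: among the $k+1$ pairwise nested independent edges of a $(k+1)$-rainbow, two must land in the same queue.)

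I expect the only real obstacle to be the corner-case bookkeeping in the structural equivalence: one must verify that shared endpoints, as well as coincidences between an endpoint of one edge and an endpoint of another, never create spurious $\sqsubset$-comparabilities, and conversely that every independent nested pair is $\sqsubset$-comparable. All of this is forced by the single orientation convention $s_e\prec t_e$, so once it is stated carefully the lemma reduces cleanly to Mirsky's theorem.
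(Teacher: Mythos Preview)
The paper does not supply its own proof of this lemma; it is quoted as a known result of Heath and Rosenberg and used as a black box. Your argument is correct and is in fact the standard one: recasting ``nested independent pair'' as comparability in the enclosure order $\sqsubset$ on edge-intervals, identifying queues with antichains and rainbows with chains, and then invoking Mirsky's theorem (or, equivalently, the explicit height-colouring you describe). The case analysis showing that edges with a common endpoint are $\sqsubset$-incomparable is exactly the bookkeeping needed to make the antichain/queue identification go through, and your pigeonhole remark handles the easy ``only if'' direction. There is nothing to compare against in the present paper, but your write-up matches the original Heath--Rosenberg argument in spirit.
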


\noindent Central in our approach is also the following contruction by Dujmovi{\'c} et al.~\cite{DPW04} for internally-triangulated outerplane graphs; for an illustration see Figs.~\ref{fig:G0}--\ref{fig:track-layout}.

\begin{lemma}[Dujmovi{\'c}, Morin, Wood~\cite{DPW04}]
\label{lm:outerplane}	
Every internally-triangulated outerplane graph, $G$, admits a straight-line outerplanar drawing, $\Gamma(G)$, such that the $y$-coordinates of vertices of $G$ are integers, and the absolute value of the difference of the $y$-coordinates of the endvertices of each edge of $G$ is either one or two. Furthermore, the drawing can be used to construct a $2$-queue layout of $G$.
\end{lemma}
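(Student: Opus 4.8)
\smallskip\noindent\textit{Proof idea.}
The plan is to establish both assertions by a single induction on $|V(G)|$, using the standard structural fact that an internally-triangulated outerplane graph with $n\ge 4$ vertices has an \emph{ear}: a vertex $v$ of degree two whose neighbours $u,w$ are joined by an edge $uw$ lying on the outer boundary, and whose removal yields an internally-triangulated outerplane graph on $n-1$ vertices (equivalently, the weak dual is a tree, which we strip one leaf at a time). Along the induction I will carry not only the straight-line outerplanar drawing $\Gamma$ but also the \emph{layering} $y\colon V(G)\to\mathbb{Z}$ given by the $y$-coordinates, and I will maintain the invariant that $\Gamma$ is a valid outerplanar straight-line drawing, that $y$ is integral, and that $|y(x)-y(x')|\in\{1,2\}$ for every edge $xx'$; this is exactly the statement, except for the final claim about the $2$-queue layout, which I derive afterwards from $\Gamma$.

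For the drawing: in the base case $G$ is a triangle $abc$, drawn as a small triangle with $y(a)=0$, $y(b)=1$, $y(c)=2$, so the edge height differences are $1,1,2$. For the inductive step, let $v$ be an ear with neighbours $u,w$, and take the drawing and layering of $G-v$ from the hypothesis; since $uw$ is an edge, $|y(u)-y(w)|\in\{1,2\}$. I then choose $y(v)$ to differ from both $y(u)$ and $y(w)$ by exactly $1$ or $2$: if $|y(u)-y(w)|=2$, set $y(v)=\tfrac12\bigl(y(u)+y(w)\bigr)$, which is at distance $1$ from each; if $|y(u)-y(w)|=1$, pick the one of the two integers just outside the interval $[y(u),y(w)]$ that lies on the side of $uw$ occupied by the outer face of $G-v$ (such an integer is at distance $1$ from one endpoint and $2$ from the other). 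Finally I place $v$ at height $y(v)$ in the open region immediately outside $uw$, choosing its $x$-coordinate so that $v$ lands on the outer face and introduces neither an edge crossing nor a collinearity.

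The step I expect to be the main obstacle is precisely this last geometric placement: in contrast to the textbook argument that puts the new vertex arbitrarily close to $uw$, integrality of the $y$-coordinate pins $v$ to a fixed nearby layer, so I must guarantee that the outer region beside $uw$ is ``tall enough'' to reach that layer without colliding with the rest of $\Gamma$. I plan to deal with this by strengthening the induction hypothesis -- e.g.\ keeping $\Gamma$ inside a bounded horizontal slab and maintaining that every boundary edge has, on its outward side, an obstruction-free pocket spanning the two layers adjacent to that edge -- and then doing the routine but fiddly bookkeeping to see that this is preserved when the ear is glued on. An alternative route is to first fix the layering by a trivial combinatorial induction (the $y(v)$-choice above needs no geometry) and then obtain $\Gamma$ by a level-planarity style realisation argument on the graph obtained by subdividing each height-$2$ edge once.

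With $\Gamma$ in hand, the $2$-queue layout is easy. Order the vertices layer by layer in increasing $y$, breaking ties inside a layer by left-to-right ($x$-coordinate) order in $\Gamma$; this is well defined because $\Gamma$ is planar. No edge joins two vertices of the same layer, so every edge is either \emph{short} (endpoints on consecutive layers) or \emph{long} (endpoints two layers apart); send the short edges to queue $Q_1$ and the long edges to $Q_2$. If two independent short edges $ab$ and $cd$ were nested, a short case analysis on the layers of their endpoints shows that either the nesting is impossible a priori (once the layers differ by two or more the two edges are separated in the order) or, when the lower endpoints share a layer, the nesting forces $a$ to lie left of $c$ while $b$ lies right of $d$ -- but then the $y$-monotone segments $ab$ and $cd$, whose endpoints sit on the two common horizontal lines, must cross in $\Gamma$, contradicting planarity. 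The same argument, with ``consecutive layers'' replaced by ``two layers apart'', rules out nested pairs in $Q_2$. Hence no two independent edges of a common queue are nested, and by the lemma of Heath and Rosenberg this is a $2$-queue layout of $G$, which completes the proof.
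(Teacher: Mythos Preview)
The paper does not prove this lemma in full: the existence of the drawing $\Gamma(G)$ is cited from~\cite{DPW04}, and the only argument the paper supplies is the paragraph right after the statement, explaining how to extract the $2$-queue layout from $\Gamma(G)$ (order the vertices by $y$-coordinate, break ties left-to-right in $\Gamma(G)$, and put an edge in the first or second queue according to whether its height difference is $1$ or~$2$). Your queue construction is exactly this, up to the harmless choice of increasing versus decreasing~$y$, and your planarity-based argument that neither queue contains a nested pair is correct and slightly more explicit than what the paper writes.

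For the drawing itself there is nothing in the present paper to compare against, but your argument is not complete. The combinatorial part of your ear induction---choosing $y(v)$ so that both new edges have height difference in $\{1,2\}$---is fine. The gap is precisely the one you flag: once $y(v)$ is pinned to a specific integer layer, you must place $v$ on that layer \emph{and} in the outer face of $\Gamma(G-v)$ adjacent to the boundary edge $uw$, without introducing a crossing or losing outerplanarity; nothing in your current invariant guarantees that this layer is reachable from the pocket beside $uw$. You propose two remedies but carry out neither, and note that the second one (``subdivide the height-$2$ edges and invoke level planarity'') does not directly yield a straight-line drawing of $G$, since a level-planar realisation of the subdivided graph may bend at the subdivision vertices. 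As written, the first half of your proposal is a plan rather than a proof.
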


Let $\langle u,v,w \rangle$ be a face of a drawing $\Gamma(G)$ produced by the construction of Lemma~\ref{lm:outerplane}, where $G$ is an internally triangulated outerplane graph. Up to renaming of the vertices of this face, we may assume that $|y(u)-y(v)|=|y(u)-y(w)|=1$, $|y(v)-y(w)|=2$ and $y(v) > y(w)$. We refer to vertex $u$ as to the \emph{anchor} of the face $\langle u,v,w \rangle$ of $\Gamma(G)$; $v$ and $w$ are referred to as \emph{top} and \emph{bottom}, respectively. It is easy to verify that drawing $\Gamma(G)$ can be converted to a $2$-queue layout of $G$ as follows:
\begin{inparaenum}[(i)]
\item for any two distinct vertices $u$ and $v$ of $G$, $u \prec v$, if and only if the $y$-coordinate of $u$ is strictly greater than the one of $v$, or the $y$-coordinate of $u$ is equal to the one of $v$, and $u$ is to the left of $v$ in $\Gamma(G)$,
\item edge $(u,v)$ is assigned to the first (second) queue if and only if the absolute value of the difference of the $y$-coordinates of $u$ and $v$ is one (two, respectively) in $\Gamma(G)$.
\end{inparaenum}

%A property of this $2$-queue layout is the following. 
Finally, let $\langle u,v,w \rangle$ and $\langle u',v',w' \rangle$ be two faces of $\Gamma(G)$, such that $u$ and $u'$ are their anchors, $v$ and $v'$ are their top vertices, and $w$ and $w'$ are their bottom vertices. If $u$ and $u'$ are distinct and $u \prec u'$ in the $2$-queue layout, then $v \prec v'$ (if $v \neq v'$) and $w \prec w'$ (if $w \neq w'$). The property clearly holds, if $u$ and $u'$ do not have the same $y$-coordinate. Otherwise, the property holds, since $\Gamma(G)$~is~planar.

\begin{figure}[t]
	\centering
	\subfloat[\label{fig:necklace-twist}{}]
	{\includegraphics[scale=0.8,page=1]{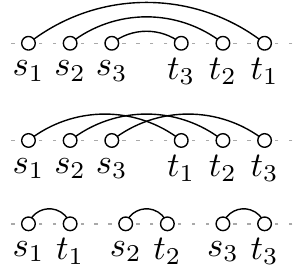}}
	\hfil
	\subfloat[\label{fig:G0}{}]
	{\includegraphics[scale=0.8,page=1]{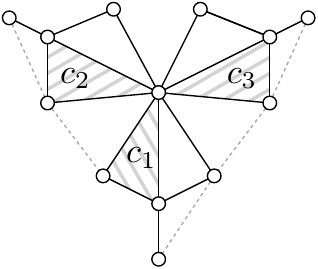}}
	\hfil
	\subfloat[\label{fig:track-layout}{}]
	{\includegraphics[scale=0.8,page=2]{two-levels}}
	\caption{%Illustrations of:
		(a)~$3$-rainbow, $3$-twist and $3$-necklace (from top to bottom);
		(b)~an internally-triangulated outerplane graph $G_0$;
		the dotted-gray edges are added to make it~biconnected; 
		its gray-shaded faces contain components $c_1$, $c_2$ and $c_3$ of $G_1$;	
		(c)~the drawing $\Gamma(G_0)$ by Lemma~\ref{lm:outerplane}; 
		the vertex-labels indicate the linear order of its $2$-queue layout; 
		the anchor vertices of faces $\langle 9,10,12 \rangle$, $\langle 3,5,9 \rangle$ and $\langle 4,8,9 \rangle$ 
		are $10$, $5$, $8$, respectively.}
	\label{fig:two-levels}
\end{figure}

\section{The Upper Bound}
\label{sec:upper-bound}
% !TEX root = paper.tex

In this section, we prove that the queue number of every planar $3$-tree is at most five. Our approach is inspired by the algorithm of Wiechert~\cite{Wie17} to compute $7$-queue layouts for general (not necessarily planar) $3$-trees. To reduce the number of required queues in the produced layouts, we make use of structural properties of the input graph. In particular, we put the main ideas of the algorithm of Wiechert~\cite{Wie17} into a \emph{peeling-into-levels} approach (see, e.g.,~\cite{Yan89}), according to which the vertices and the edges of the input graph are partitioned as follows:
\begin{inparaenum}[(i)]
\item vertices incident to the outerface are at level zero,
\item vertices incident to the outerface of the graph induced by deleting all vertices of levels $0,\ldots,i-1$ are at level $i$,
\item edges between same-level vertices are called \emph{level edges}, and 
\item edges between vertices of different levels are called \emph{binding~edges}.
\end{inparaenum}

To keep the description simple, we first show how to compute a $5$-queue layout of a planar $3$-tree $G$, assuming that~$G$ has only two levels. Then, we~extend our approach to more than two levels. We conclude by discussing the differences between the approach of Wiechert~\cite{Wie17} and ours; we also describe which properties of planar $3$-trees we exploited to reduce the required number of queues.

% ============================================================
\myparagraph{The Two-Level Case.}
%\label{ssec:two-levels}
% ============================================================
%
We start with the (intuitively easier) case in which the given planar $3$-tree $G$ consists of two levels, $L_0$ and $L_1$. Since we use this case as a tool to cope with the general case of more than two levels, we consider a slightly more general scenario. In particular, we make the following assumptions (see Fig.~\ref{fig:G0}): 
\begin{inparaenum}[({A.}1)]
\item \label{a:g_0} the graph $G_0$ induced by the vertices of level $L_0$ is outerplane and internally-triangulated, and
\item \label{a:g_1} each connected component of the graph $G_1$ induced by the vertices of level $L_1$ is outerplane and resides within a (triangular) face of $G_0$.
\end{inparaenum}
Without loss of generality we may also assume that $G_0$ is biconnected, as otherwise we can augment it to being biconnected by adding (level-$L_0$) edges without affecting its outerplanarity. Note that in a planar $3$-tree, graph $G_0$ is simply a triangle (and not an outerplane graph, as we have assumed), and as a result $G_1$ is a single outerplane component. Our algorithm maintains the following invariants:
\begin{enumerate}[{I.}1]
\item \label{i:order} the linear order is such that all vertices of $L_0$ precede all vertices of $L_1$; 
\item \label{i:level} the level edges use two queues, $\Qh_0$ and $\Qh_1$; 
\item \label{i:bind} the binding edges use three queues, $\Qh_2$, $\Qh_3$, and $\Qh_4$.

\end{enumerate}
In the following lemma, we show how to determine a (partial) linear order of the vertices of levels $L_0$ and $L_1$ that satisfies the first two invariants of our algorithm.

\begin{lemma}\label{lm:order-level}
There is an order of vertices of level $L_0$ and a partial order of vertices of level $L_1$ such that I.\ref*{i:order} and I.\ref*{i:level} are satisfied.
\end{lemma}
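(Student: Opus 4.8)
The plan is to read the order of $L_0$ off the drawing provided by Lemma~\ref{lm:outerplane}, and to build the partial order of $L_1$ component by component, laying the components of $G_1$ out as consecutive blocks whose relative order follows the order of the anchors of their host faces in $G_0$. Assigning each level edge to one of two queues according to the magnitude of the $y$-coordinate difference of its endvertices in the respective drawing will then yield the two level queues $\Qh_0$ and $\Qh_1$. Concretely, since $G_0$ is internally-triangulated and outerplane by assumption~A.\ref{a:g_0}, Lemma~\ref{lm:outerplane} applies to $G_0$ directly: I take the linear order of $L_0$ and the split of its level edges into $\Qh_0$ ($y$-difference $1$) and $\Qh_1$ ($y$-difference $2$) exactly as in the resulting $2$-queue layout of $G_0$.

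For $L_1$, each component $c$ of $G_1$ is outerplane by A.\ref{a:g_1} but need not be internally triangulated, so I first add chords to it so that every inner face becomes a triangle, obtaining an internally-triangulated outerplane graph $c^+$, apply Lemma~\ref{lm:outerplane} to $c^+$, and restrict the resulting $2$-queue layout back to $c$ (deleting edges from a queue layout leaves a queue layout). This gives a linear order $\sigma_c$ of $V(c)$ together with a partition of the level edges of $c$ into two queues, which I merge into $\Qh_0$ and $\Qh_1$ by the same $y$-difference convention. I then place every vertex of $L_0$ before every vertex of $L_1$, so that I.\ref{i:order} holds by construction; inside the $L_1$-part I make the vertices of each component a single contiguous block ordered internally by $\sigma_c$, and I order the blocks in agreement with the order of the anchors of the host faces of their components (invoking the anchor-monotonicity property stated after Lemma~\ref{lm:outerplane}). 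The relative order of blocks whose components share an anchor, or share a host face, is left unspecified; this residual freedom is the ``partial'' part, to be resolved later when the binding edges are routed.

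It remains to check I.\ref{i:level}, i.e. that in neither $\Qh_0$ nor $\Qh_1$ are two independent edges nested. Two level-$L_0$ edges in the same queue are non-nested by Lemma~\ref{lm:outerplane}; two level-$L_1$ edges of one and the same component in the same queue are non-nested for the same reason, applied to $c^+$; a level-$L_0$ edge and a level-$L_1$ edge are never nested because, by I.\ref{i:order}, both endvertices of the former precede both endvertices of the latter; and two level-$L_1$ edges belonging to different components are never nested because the two components occupy disjoint consecutive blocks of the order. Hence no two independent edges in $\Qh_0$ (respectively $\Qh_1$) are nested, so all level edges are accommodated by just these two queues, and I.\ref{i:level} follows.

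The construction and the verification of I.\ref{i:order} and I.\ref{i:level} are essentially bookkeeping once these disjointness observations are in place; the genuinely delicate decision hidden here is the choice of the order of the component-blocks — and, to a lesser extent, of the internal orders $\sigma_c$ — since this same order must later support the $3$-queue assignment of the binding edges demanded by I.\ref{i:bind}. Lemma~\ref{lm:order-level} deliberately asks only for I.\ref{i:order} and I.\ref{i:level}, for which the above suffices, but the anchor-consistent block order is already picked with the binding edges in mind, and proving that it indeed supports I.\ref{i:bind} is the real work carried out in the subsequent lemmas; that is where I expect the main difficulty to lie.
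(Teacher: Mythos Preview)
Your argument is correct and follows essentially the same route as the paper: apply Lemma~\ref{lm:outerplane} to $G_0$ and to each component of $G_1$, place $L_0$ entirely before $L_1$, make each $L_1$-component a contiguous block, and observe that the two per-graph queues can be merged because blocks are disjoint and $L_0$ precedes $L_1$. Your explicit triangulation of the $G_1$-components and your case analysis for~I.\ref{i:level} make explicit what the paper leaves implicit.

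The only organisational difference is where the ``partial'' sits. In the paper, Lemma~\ref{lm:order-level} fixes the internal order of each $L_1$-component but leaves the relative order of the blocks \emph{entirely} open; that order is determined only in Lemma~\ref{lm:bind} (anchor first, then top, then bottom). You instead already commit to the anchor-consistent part of the block order here and leave only the ties unresolved. This is harmless for I.\ref{i:order} and I.\ref{i:level} --- any concatenation of the blocks works --- and, as you note, is chosen with~I.\ref{i:bind} in mind; just be aware that the anchor-monotonicity property you invoke is not needed for the present lemma at all, only later when the binding edges enter.
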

\begin{proof}
To compute an order that satisfies I.\ref*{i:order}, we construct two orders, one for the vertices of level $L_0$ (that satisfies I.\ref*{i:level}) and one for the vertices of level $L_1$ (that also satisfies I.\ref*{i:level}), and then we concatenate them so that the vertices of $L_0$ precede the vertices of $L_1$. 
%Critical in this step is to maintain~I.\ref*{i:bind}, which we formally prove in Lemma~\ref{lm:bind}. 

To compute an order of the vertices of $L_0$ satisfying I.\ref*{i:level}, we apply Lemma~\ref{lm:outerplane}, as by our initial assumption~A.\ref{a:g_0}, graph $G_0$ is internally-triangulated and outerplane. Thus, I.\ref*{i:level} is satisfied for the vertices of level~$L_0$.
To compute an order of the vertices of $L_1$ satisfying I.\ref*{i:level}, we apply Lemma~\ref{lm:outerplane} individually for every connected component of $G_1$, which can be done by our initial assumption~A.\ref{a:g_1}. Then the resulting orders are concatenated (as defined by next Lemma~\ref{lm:bind}). Since for every two connected components of $G_1$, all vertices of the first one either precede or follow all vertices of the second one, we can use the same two queues (denoted by $\Qh_0$ and $\Qh_1$ in I.\ref*{i:level}) for all the vertices of $L_1$. Therefore, I.\ref*{i:level} is satisfied.
\end{proof} 

\noindent Next, we complete the order of the vertices of $G$, in a way that the binding edges between $L_0$ and $L_1$ require at most three additional queues so as to satisfy I.\ref*{i:bind}.

\begin{lemma}\label{lm:bind}
Given the linear order of the vertices of level $L_0$ and the partial order of the vertices of level $L_1$ produced by Lemma~\ref{lm:order-level}, there is a total order of the vertices of $L_0$ and $L_1$ that extends their partial orders and an assignment of the binding edges between $L_0$ and $L_1$ into three queues such that I.\ref*{i:bind} is satisfied.
\end{lemma}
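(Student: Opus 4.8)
The plan is to complete the linear order by fixing the only thing Lemma~\ref{lm:order-level} left open --- the relative order of the connected components of $G_1$ --- and then to route the binding edges into three queues according to the \emph{role} (anchor, top, or bottom) of their $L_0$-endpoint. From the drawing $\Gamma(G_0)$ of Lemma~\ref{lm:outerplane} every internal (hence triangular, by A.\ref{a:g_0}) face of $G_0$ has a well-defined anchor, top and bottom, and in the induced $2$-queue order of $L_0$ the top has the largest and the bottom the smallest $y$-coordinate, so $\mathrm{top}(f) \prec \mathrm{anchor}(f) \prec \mathrm{bottom}(f)$. Since each component $c$ of $G_1$ lies inside a single face $f_c$ of $G_0$ (A.\ref{a:g_1}), every binding edge incident to a vertex of $c$ reaches, by planarity, one of the three vertices of $f_c$; I would place it in $\Qh_2$, $\Qh_3$ or $\Qh_4$ according to whether that vertex is the top, the anchor or the bottom of $f_c$. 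This is a partition of all binding edges, so I.\ref{i:bind} will follow once each $\Qh_i$ is shown to contain no two nested independent edges. Invariants I.\ref{i:order} and I.\ref{i:level} are untouched, since reordering components changes neither the order within $G_0$ nor the order within any component (and there are no level edges between components).

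To order the components I would use the lexicographic order on the triple $\bigl(\mathrm{anchor}(f_c),\,\mathrm{top}(f_c),\,\mathrm{bottom}(f_c)\bigr)$ induced by $\prec$, breaking the remaining ties (components sharing the same face) arbitrarily, and then concatenate the per-component orders in that order; this clearly extends the partial order of Lemma~\ref{lm:order-level}. The key property of this choice is the monotonicity statement: if $c$ precedes $c'$ then $\mathrm{anchor}(f_c)\preceq\mathrm{anchor}(f_{c'})$, $\mathrm{top}(f_c)\preceq\mathrm{top}(f_{c'})$ and $\mathrm{bottom}(f_c)\preceq\mathrm{bottom}(f_{c'})$. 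When the two anchors are distinct this is exactly the last property recorded in the preliminaries. The one case that needs a genuine argument --- and this is the main obstacle --- is when $f_c$ and $f_{c'}$ share their anchor $x$: here I would argue by planarity of $\Gamma(G_0)$ that $\mathrm{top}(f_c)\prec\mathrm{top}(f_{c'})$ forces $\mathrm{bottom}(f_c)\preceq\mathrm{bottom}(f_{c'})$, because in $\Gamma(G_0)$ the two tops lie at height $y(x)+1$ and the two bottoms at height $y(x)-1$, so if the bottoms were in the opposite order the straight segments representing the edges $\mathrm{top}(f_c)\,\mathrm{bottom}(f_c)$ and $\mathrm{top}(f_{c'})\,\mathrm{bottom}(f_{c'})$ --- which are genuine edges of $G_0$ with no common endpoint --- would switch sides between the two heights and hence cross.

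It then remains to verify that each $\Qh_i$ is a valid queue. Two binding edges in the same $\Qh_i$ have $L_0$-endpoints in the same role; if they belong to the same component they reach the unique vertex of that face in that role and so are not independent. Hence I may assume they belong to distinct components $c,c'$, say with $c$ preceding $c'$, and write them as $(p,v)$ and $(p',v')$ with $v\in V(c)$, $v'\in V(c')$. Independence forces $p\neq p'$; the component order gives $v\prec v'$; monotonicity gives $p\preceq p'$, hence $p\prec p'$; and $p'\prec v$ since $p'\in L_0$ and $v\in L_1$ (by I.\ref{i:order}). Thus $p\prec p'\prec v\prec v'$, which is a crossing pattern, not a nesting, so no two independent edges of $\Qh_i$ are nested. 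Consequently $\Qh_2,\Qh_3,\Qh_4$ are queues, I.\ref{i:bind} holds, and together with the preserved I.\ref{i:order} and I.\ref{i:level} this proves the lemma. Beyond the shared-anchor subcase flagged above, the remaining work (checking the monotonicity feeds correctly into the three role-cases) is routine.
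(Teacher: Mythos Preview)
Your proposal is correct and follows the same approach as the paper: order the $L_1$-components lexicographically by $(\mathrm{anchor},\mathrm{top},\mathrm{bottom})$ of their enclosing face, and assign each binding edge to one of three queues according to the role (anchor/top/bottom) of its $L_0$-endpoint. Your treatment is in fact slightly tighter than the paper's --- you package the argument as a single monotonicity statement and give a uniform non-nesting proof for all three queues, and you explicitly supply the shared-anchor planarity step that the paper leaves implicit when it declares the proof for $\Qh_4$ ``similar'' to that for $\Qh_3$.
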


\begin{proof}
Consider a connected component $c$ of $G_1$. By our initial assumption~A.\ref{a:g_1}, component $c$ resides within a triangular face $\langle u,v,w \rangle$ of~$G_0$. Let $u$, $v$ and $w$ be the anchor, top and bottom vertices of the face, respectively. We assign the binding edges incident to $u$ to queue $\Qh_2$, the ones incident to $v$ to queue $\Qh_3$ and the ones incident to $w$ to queue $\Qh_4$; see the blue, red, and green edges in Fig.~\ref{fig:two-levels-layout}. 

\begin{figure}[t]
	\centering
	\includegraphics[scale=0.7,page=3]{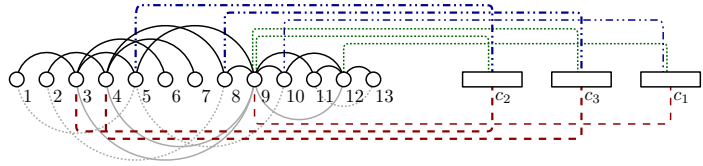}
	\caption{The $5$-queue layout for the graph in Fig.~\ref*{fig:two-levels};
	since $5 \prec 8$ and $8 \prec 10$ in the order of the vertices of level $L_0$ 
	as seen in Fig.~\ref*{fig:two-levels}, 
	$c_2$ precedes $c_3$, and $c_3$ precedes $c_1$.}
	\label{fig:two-levels-layout}
\end{figure}

Next we describe how to compute the relative order of the connected components of $G_1$. Let $c$ and $c'$ be two such components. By our initial assumption~A.\ref{a:g_1}, $c$ and $c'$ reside within two triangular faces $\langle u,v,w \rangle$ and $\langle u',v',w' \rangle$ of $G_0$. Assume that $u$ and $u'$ are the anchors of the two faces, $v, v'$ are top and $w, w'$ are bottom vertices. If $u \neq u'$, then $c$ precedes $c'$ if and only if $u \prec u'$ in the  order of $L_0$. If $u = u'$, we have $v \neq v'$ or $w \neq w'$. If $v \neq v'$, then $c$ precedes $c'$ if and only if $v \prec v'$ in the order of $L_0$. Otherwise (that is, $u = u'$ and $v = v'$), $c$ precedes $c'$ if and only if $w \prec w'$ in the order of $L_0$. We claim that for the resulting order of $L_1$, I.\ref*{i:bind} is satisfied, that is, no two edges of each of $\Qh_2$, $\Qh_3$ and $\Qh_4$ are nested.

We start our proof with $\Qh_2$. Consider two independent edges $(x,y) \in \Qh_2$ and $(x',y') \in \Qh_2$, where $x,x' \in L_0$ and $y,y' \in L_1$ (see the blue edges in Fig.~\ref{fig:two-levels-layout} incident to $5$ and $8$). By construction of $\Qh_2$, $x$ and $x'$ are anchors of two different faces $f_x$ and $f_{x'}$ of $G_0$ (see the faces of Fig.~\ref{fig:track-layout} that contain $c_2$ and $c_3$). Without loss of generality we assume that $x \prec x'$ in the order of $L_0$. Then, the two components $c_y$ and $c_{y'}$ of $G_1$, that reside within $f_x$ and $f_{x'}$ and contain $y$ and $y'$, are such that all vertices of $c_y$ precede all vertices of $c_{y'}$ (in Fig.~\ref{fig:two-levels-layout}, $x=5$ precedes $y=8$; thus, $c_y=c_2$ precedes $c_{y'}=c_3$). Since $y \in c_y$ and $y' \in c_{y'}$, edges $(x,y)$ and $(x',y')$ do not~nest.  

We continue our proof with $\Qh_3$ (the proof for $\Qh_4$ is similar). Let $(x,y)$ and $(x',y')$ be two independent edges of $\Qh_3$, where $x,x' \in L_0$ and $y,y' \in L_1$ (see the red edges in Fig.~\ref{fig:two-levels-layout} incident to $3$ and $4$). By construction of $\Qh_3$, $x$ and $x'$ are the top vertices of two different faces $f_x$ and $f_{x'}$ of $G_0$ (see the faces of Fig.~\ref{fig:track-layout} that contain $c_2$ and $c_3$). Let $c_y$ and $c_{y'}$ be the components of $G_1$ that reside within $f_x$ and $f_{x'}$ and contain $y$ and $y'$. Finally, let $u$ and $u'$ be the anchors of $f_x$ and $f_{x'}$, respectively. Suppose first that $u \neq u'$ and assume that $u \prec u'$ in the order of $L_0$. Since $u \prec u'$, it follows that $x \prec x'$ and that all vertices of $c_y$ precede all vertices of $c_{y'}$ (in Fig.~\ref{fig:two-levels-layout}, $u=5$ precedes $u'=8$, which implies that $x=3$ precedes $x'=4$; thus, $c_y=c_2$ precedes $c_y'=c_3$). Since $y \in c_y$ and $y' \in c_{y'}$, it follows that $(x,y)$ and $(x',y')$ are not nested. Suppose now that $u = u'$ and assume that $x \prec x'$ in the order of $L_0$. Since $u = u'$ and $x \prec x'$, all vertices of $c_y$ precede all vertices of $c_{y'}$. Since $y \in c_y$ and $y' \in c_{y'}$, it follows that $(x,y)$ and $(x',y')$ are not nested. Hence, I.\ref*{i:bind} is satisfied, which concludes the proof.
\end{proof}

Lemmas~\ref{lm:order-level} and~\ref{lm:bind} conclude the two-level case. Before we proceed with the multi-level case, we make a useful observation. To satisfy I.\ref*{i:bind}, we did not impose any restriction on the order of the vertices of each connected component of~$G_1$ (any order that satisfies I.\ref*{i:level} for level $L_1$  would be suitable for us, that is, not necessarily the one constructed by Lemma~\ref{lm:outerplane}). What we fixed, was the relative order of these components. We are now ready to proceed to the multi-level~case.

% ============================================================
\myparagraph{The Multi-Level Case.}
%\label{ssec:multi-levels}
% ============================================================
%
We now consider the general case, in which our planar $3$-tree $G$ consists of more than two levels, say $L_0,L_1,\ldots,L_\lambda$ with $\lambda \geq 2$. Let $G_i$ be the subgraph of $G$ induced by the vertices of level~$L_i$; $i=0,1,\ldots,\lambda$. The connected components of each graph $G_i$ are internally-triangulated outerplane graphs that are not necessarily biconnected: 
%~\cite{Wie17}. 
Clearly, this holds for $G_0$, which is a triangle. Assuming that for some $i=1,\ldots,\lambda$, graph $G_{i-1}$ has the claimed property, we observe that each connected component of $G_i$ resides within a facial triangle of $G_{i-1}$. Since each non-empty facial triangle of $G_{i-1}$ in $G$ induces a planar $3$-tree~\cite{MNRA11}, the claim follows by observing that the removal of the outer face of a planar $3$-tree yields a plane graph, whose outer vertices induce an internally-triangulated outerplane graph.

For the recursive step of our algorithm, assume that for some $i=0,\ldots,\lambda-1$ we have a $5$-queue layout for each of the connected components of the graph $H_{i+1}$ induced by the vertices of $L_{i+1},\ldots,L_\lambda$, that satisfies the following invariants.
%\footnote{Observe that M.\ref{m:order}--\ref{m:bind} are modifications of I.\ref*{i:order}--I.\ref*{i:bind}.}:
%
\begin{enumerate}[{M.}1]
\item \label{m:order} the linear order is such that all vertices of $L_j$ precede all vertices of $L_{j+1}$ for every $j=i+1,\ldots,\lambda-1$;
\item \label{m:level} the level edges of $L_{i+1},\ldots,L_\lambda$ use two queues, $\Qh_0$ and $\Qh_1$;
\item \label{m:bind} for every $j=i+1,\ldots,\lambda-1$, the binding edges between $L_j$ and $L_{j+1}$ use three queues, $\Qh_2$, $\Qh_3$, and $\Qh_4$.
\end{enumerate}

Based on these layouts, we show how to construct a $5$-queue layout (satisfying M.\ref*{m:order}--M.\ref*{m:bind}) for each of the connected components of the graph $H_i$ induced by the vertices of $L_i,\ldots,L_\lambda$. Let $C_i$ be such a component. By definition, $C_i$ is delimited by a connected component $c_i$ of $G_i$ which is internally-triangulated and outerplane. If none of the faces of $c_i$ contains a connected component of $H_{i+1}$, then we compute a $2$-queue layout of it using Lemma~\ref{lm:outerplane}. Consider now the more general case, in which some of the faces of $c_i$ contain connected components of $H_{i+1}$. By M.\ref*{m:order}--M.\ref*{m:bind}, we have computed $5$-queue layouts for all the connected components, say $d_1,\ldots,d_k$, of $H_{i+1}$ that reside within the faces of $c_i$. 

\begin{figure}[t]
	\centering
	\subfloat[\label{fig:multi-invariants}{For each of $d_1,\ldots,d_k$ all vertices of $L_j$ precede all vertices of $L_{j+1}$; $j=i+1,\ldots,\lambda-1$}]
	{\includegraphics[width=\textwidth, page=1]{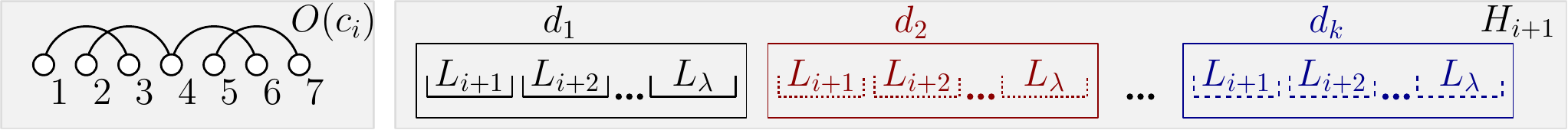}}
	
	\subfloat[\label{fig:multi-order}{The computed linear order based on $p_i,\ldots,p_\lambda$}]
	{\includegraphics[width=\textwidth, page=2]{multi-levels}}
	\caption{Illustrations for the proof of Theorem~\ref{thm:upper-bound}.
%	Illustrations of:
%		(a)~Invariant M.\ref*{m:order}, according to which for each of the connected components $d_1,\ldots,d_k$ all vertices of $L_j$ precede all vertices of $L_{j+1}$ for each $j=i+1,\ldots,\lambda-1$, and	
%		(b)~the computed linear order based on $p_i,\ldots,p_\lambda$.
	}
	\label{fig:multi-levels}
\end{figure}

We proceed by applying the two-level algorithm to the subgraph of $C_i$ induced by the vertices of $c_i$ and the vertices incident to the outer faces of $d_1,\ldots,d_k$. By the last observation we made in the two-level case, this will result in:
\begin{inparaenum}[(a)]
\item \label{tl:order} a linear order $\mathcal{O}(c_i)$ of the vertices of $c_i$, 
\item \label{tl:comp} a relative order of the components $d_1,\ldots,d_k$,
\item \label{tl:q1q2} an assignment of the (level-$L_i$) edges of $c_i$ into $\Qh_0$ and $\Qh_1$, and
\item \label{tl:q345} an assignment of the binding edges between $c_i$ and each of $d_1,\ldots,d_k$ into $\Qh_2$, $\Qh_3$ and $\Qh_4$.
\end{inparaenum}
Up to renaming, we assume that $d_1,\ldots,d_k$ is the computed order of these components; see Fig.~\ref{fig:multi-invariants}. 

By~(\ref{tl:q1q2}) and~(\ref{tl:q345}), all edges of $C_i$ are assigned to $\Qh_0,\dots,\Qh_4$, since the edges of $d_1,\ldots,d_k$ have been recursively assigned to these queues. 
%Next, we describe the order of the vertices of $C_i$. First, we partition the linear order of $C_i$ into $\lambda-i+1$ disjoint intervals, say $p_i,\ldots,p_\lambda$, such that $p_\mu$ precedes $p_\nu$, if and only if, $\mu < \nu$. 
Next, we partition the order of vertices of $C_i$ into $\lambda-i+1$ disjoint intervals, say $p_i,\ldots,p_\lambda$, such that $p_\mu$ precedes $p_\nu$ if and only if $\mu \prec \nu$. All the (level-$L_i$) vertices of $c_i$ are contained in $p_i$ in the order $\mathcal{O}(c_i)$ by~(\ref{tl:order}). For $j=i+1,\ldots,\lambda$, $p_j$ contains the vertices of $L_j$ of each of the components $d_1,\ldots,d_k$, such that the vertices of  $L_j$ of $d_\mu$ precede the vertices of $L_j$ of $d_\nu$ if and only if $\mu \prec \nu$; see Fig.~\ref{fig:multi-order}. 
The proof that M.\ref*{m:order}--M.\ref*{m:bind} are satisfied can be found in \arxapp{Appendix~\ref{app:upper-bound}}{the full version~\cite{arxiv}}. We summarize in the~following.

\newcommand{\upperbound}{Every planar 3-tree has queue number at most $5$.}
\begin{theorem}\label{thm:upper-bound}
\upperbound
\end{theorem}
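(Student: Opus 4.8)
The plan is to run the recursive construction of the preceding pages as an induction, peeling $G$ into levels $L_0,\dots,L_\lambda$ (where $L_0$ is the outer triangle) and reducing everything to the two-level machinery. For $\lambda=0$ the graph is a triangle and admits a $1$-queue layout; for $\lambda=1$ it is exactly the two-level case with $G_0$ the outer triangle (outerplane, internally-triangulated and biconnected) and $G_1$ a single outerplane component inside it, so Lemmas~\ref{lm:order-level} and~\ref{lm:bind} already give a $5$-queue layout. For $\lambda\ge 2$ I would start the recursion at $i=\lambda$, where $H_\lambda=G_\lambda$ is a disjoint union of internally-triangulated outerplane graphs with no nested components (so Lemma~\ref{lm:outerplane} yields $2$-queue layouts that vacuously meet M.\ref*{m:order}--M.\ref*{m:bind}), and apply the recursive step for $i=\lambda-1,\dots,0$; at $i=0$ this produces a $5$-queue layout of $H_0=G$. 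So the whole proof amounts to checking that the order and queue assignment produced by one recursive step satisfy M.\ref*{m:order}--M.\ref*{m:bind}, and that every edge of $C_i$ indeed gets a queue (which holds because an edge of $C_i$ is either a level-$L_i$ edge of $c_i$, a binding edge between $c_i$ and some $d_\mu$, or an edge inside some $d_\mu$ that was assigned recursively).

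For M.\ref*{m:order} there is nothing to do: the order of $C_i$ is $p_i$ followed by $p_{i+1}$ followed by $\dots$ followed by $p_\lambda$, interval $p_i$ is exactly $L_i$, interval $p_j$ ($j\ge i+1$) is exactly the $L_j$-vertices of $d_1,\dots,d_k$, and each $d_\mu$ already had its levels in order. The real work is M.\ref*{m:level} and M.\ref*{m:bind}, and here I would isolate two facts. First, for each fixed $\mu$ the recursively fixed order of $d_\mu$ occurs as a \emph{subsequence} of the order of $C_i$ (the $L_j$-block of $d_\mu$ inside $p_j$ keeps $d_\mu$'s internal order, and these blocks appear in $p_{i+1},\dots,p_\lambda$ in order of increasing level, consistently with M.\ref*{m:order} for $d_\mu$); hence every pair of independent edges lying \emph{inside one} $d_\mu$ retains the non-nesting guaranteed by the inductive hypothesis, no matter how the other components are interleaved. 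Second, inside each interval $p_j$ the components are placed \emph{block-wise} in the common relative order $d_1,\dots,d_k$ computed by the two-level step (see Fig.~\ref{fig:multi-order}); so if two independent edges lie in \emph{different} components $d_\mu,d_\nu$ with, say, $\mu<\nu$, then within every interval the block of $d_\mu$ precedes the block of $d_\nu$, and a short case distinction on the two level pairs shows the edges are either separated or crossing --- never nested. The only edges not covered by these two facts are the new level-$L_i$ edges of $c_i$ and the new binding edges between $L_i$ and $L_{i+1}$; those are exactly the edges that Lemma~\ref{lm:bind} (applied to $c_i$ together with the outer cycles of $d_1,\dots,d_k$) places into $\Qh_0,\Qh_1$ and $\Qh_2,\Qh_3,\Qh_4$, and by the observation made at the close of the two-level case that placement is valid as soon as the \emph{relative} order of $d_1,\dots,d_k$ is fixed --- which is precisely the order used in the second fact.

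I expect the main obstacle to be the bookkeeping for M.\ref*{m:bind}, because the same three queues $\Qh_2,\Qh_3,\Qh_4$ are reused across \emph{all} level pairs at once, so one must rule out nesting between a binding edge of the pair $(L_j,L_{j+1})$ and a binding edge of the pair $(L_{j'},L_{j'+1})$ in the same queue even when $j\ne j'$ and even when the two edges live in different components. The argument is the case analysis hinted at above: if $|j-j'|\ge 2$ the two edges occupy disjoint unions of intervals and are trivially separated; if (say) $j'=j+1$ then both ``middle'' endpoints lie in $p_{j+1}$, and using the block order of $d_1,\dots,d_k$ inside $p_{j+1}$ one checks in each sub-case that the edges cross or are separated but do not nest. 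Spelling this out for all combinations --- including the interaction of the freshly created $(L_i,L_{i+1})$ binding edges with the recursively inherited ones --- is the one genuinely tedious part; the remaining ingredients (the base cases, M.\ref*{m:order}, M.\ref*{m:level}, and the claim that all edges get a queue) follow directly from Lemmas~\ref{lm:outerplane}, \ref{lm:order-level} and~\ref{lm:bind}.
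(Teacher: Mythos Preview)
Your proposal is correct and follows the same architecture as the paper's proof: induct by decreasing $i$, checking at each step that M.\ref*{m:order}--M.\ref*{m:bind} are preserved using (i)~the recursively guaranteed $5$-queue layout of each $d_\mu$, whose internal order survives as a subsequence, and (ii)~the block-wise relative order $d_1,\dots,d_k$ of the components inside each interval $p_j$.

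The one place where you make the argument harder than necessary is the cross-level interaction in M.\ref*{m:bind}. The paper short-circuits your ``genuinely tedious'' case analysis with a single observation: every edge of $C_i$ has its two endvertices either in the same interval $p_j$ (level edge) or in two \emph{consecutive} intervals $p_j,p_{j+1}$ (binding edge). Once you record this, the case $j\ne j'$ for two binding edges is immediate from the interval structure alone, with no appeal to the block order: if $|j-j'|\ge 2$ the four endpoints lie in pairwise disjoint intervals; if $j'=j+1$, writing $e=(a,b)$ with $a\in p_j$, $b\in p_{j+1}$ and $e'=(c,d)$ with $c\in p_{j+1}$, $d\in p_{j+2}$, one has $a\prec c$ (since $p_j$ precedes $p_{j+1}$) and $b\prec d$ (since $p_{j+1}$ precedes $p_{j+2}$), so $e$ cannot nest $e'$, and symmetrically $e'$ cannot nest $e$. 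Thus the only substantive case is $j=j'$, which is exactly what your two facts (subsequence preservation for same $d_\mu$; block order for different $d_\mu,d_\nu$) together with Lemma~\ref{lm:bind} handle. Adding this observation up front would let you drop the anticipated bookkeeping entirely.
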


We note here that queue layouts are closely related to track layouts; for definitions refer to~\cite{DPW04}. The following result follows immediately from a known result by Dujmovi{\'c}, Morin, Wood~\cite{DMW05}; see \arxapp{Appendix~\ref{app:upper-bound}}{the full version~\cite{arxiv}} for details.

\begin{corollary}\label{thm:sp-track}
The track number of a planar 3-tree is at most $4000$.
\end{corollary}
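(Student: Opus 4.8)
The plan is to obtain the corollary as an immediate consequence of Theorem~\ref{thm:upper-bound} together with a known conversion of queue layouts into track layouts due to Dujmovi\'c, Morin and Wood~\cite{DMW05}. First I would recall from~\cite{DMW05} the quantitative statement that every graph of queue number $q$ admits a track layout with at most $f(q)$ tracks, for an explicit function $f$. By Theorem~\ref{thm:upper-bound}, the algorithm of Section~\ref{sec:upper-bound} produces, for any planar $3$-tree $G$, a concrete linear order and an assignment of the edges of $G$ into five queues; feeding this $5$-queue layout into the construction of~\cite{DMW05} and setting $q = 5$ yields a track layout of $G$ with at most $f(5)$ tracks.

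The remaining work is purely arithmetical: evaluate $f(5)$ and check that it equals $4000$. I would also take care to quote the precise form of the queue-to-track conversion — in particular the incarnation that is stated with explicit numerical constants rather than as an $\Oh(\cdot)$ estimate — and to verify that it carries no side hypothesis (such as bounded maximum degree or a connectivity requirement) that a planar $3$-tree could violate, so that the bound applies verbatim to the layouts produced in Section~\ref{sec:upper-bound}.

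There is no genuine obstacle here; the only thing that requires attention is bookkeeping, namely matching the constants of the cited bound so that exactly the value $4000$ drops out when $q = 5$ is substituted. It is also worth stressing that this bound is far from tight — no attempt is made to optimize it — and that the purpose of the corollary is simply to record that improving the queue number of planar $3$-trees to a constant automatically provides a constant upper bound on their track number as well.
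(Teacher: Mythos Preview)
Your plan has a genuine gap: the purely queue-number-based conversion of Dujmovi\'c, Morin and Wood does not give $4000$ for $q=5$. The generic bound in~\cite{DMW05} (also recalled in the appendix of the present paper) is that every $q$-queue graph has track number at most $4q\cdot(4q)^{(2q-1)(4q-1)}$; for $q=5$ this is $20\cdot 20^{171}$, which is astronomically larger than $4000$. So the ``purely arithmetical'' step you describe would fail.

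What the paper actually uses is a different, sharper lemma from the same source~\cite{DMW05}: every $q$-queue graph with \emph{acyclic chromatic number} $c$ has track number at most $c(2q)^{c-1}$. The extra ingredient you are missing is therefore the fact (from~\cite{FRR01}) that the unique $4$-colouring of a planar $3$-tree is acyclic, so $c=4$. Plugging $c=4$ and $q=5$ into $c(2q)^{c-1}$ gives $4\cdot 10^{3}=4000$, which is where the constant comes from. Your outline should invoke this acyclic-colouring lemma and the $c=4$ bound, not the generic queue-to-track conversion.
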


\noindent\textbf{Differences with Wiechert's algorithm.} Wiechert's algorithm~\cite{Wie17}~builds upon a previous algorithm by Dujmovi{\'c} et al.~\cite{DMW05}. Both  yield queue layouts for general $k$-trees, using the breadth-first search (BFS) starting from an arbitrary vertex $r$ of $G$. For each $d > 0$ and each connected component $C$ induced by the vertices at distance~$d$~from $r$, create a node (called \emph{bag}) ``containing'' all vertices of $C$; two bags are adjacent if there is an edge of~$G$ between them. For a $k$-tree, the result is a tree of bags $T$, called \df{tree-partition}, so that 
\begin{inparaenum}[(P.1)]
\item \label{prp:intra} every node of $T$ induces a connected $(k-1)$-tree, and 
\item \label{prp:inter} for each non-root node $x \in T$, if $y \in T$ is the parent of $x$, then the vertices in $y$ having a neighbor in $x$ form a clique of size $k$.
\end{inparaenum}
Both algorithms order the bags of $T$, such that the vertices of the bags at distance $d$ from $r$ precede those at distance $d+1$. The vertices within each bag are ordered by induction using P.\ref*{prp:intra}. 

The algorithms differ in the way the edges are assigned to queues; the more efficient one by Wiechert~\cite{Wie17} uses $2^k-1$ queues ($2^{k-1}$ for the inter- and $2^{k-1}+1$ for the intra-bag edges), which is worst-case optimal for $1$- and $2$-trees.

If $G$ is a planar $3$-tree and the BFS is started from a dummy vertex incident to the three outervertices of $G$, then the intra- and inter-bag edges correspond to the \df{level} and \df{binding} edges of our approach, while the bags at distance $d$ from $r$ in $T$ correspond to different connected components of level~$d$.

To reduce the number of queues, we observed that in $G$ 
\begin{inparaenum}[(i)]
\item \label{prp:obs2} every node of $T$ induces a connected outerplanar graph, while 
\item \label{prp:obs1} each clique of size three by P.\ref*{prp:inter} is a triangular face of $G$. 
\end{inparaenum}
By the first observation, we reduced the number of queues for intra-bag edges; by the second, we combined orders from different bags more efficiently.

\section{The Lower Bound}
\label{sec:lower-bound}
% !TEX root = paper.tex

In the following, we prove that the queue number of planar 3-trees is at least four.
To this end, we will define recursively a subgraph of a planar 3-tree $G$ and we will show that it contains at least one $4$-rainbow in any ordering. 
Starting with a set of $T$ independent edges $(s_i, t_i)$ with $1 \leq i \leq T$ and $T$ to be determined later, we connect their endpoints 
to two unique vertices, say $A$ and $B$, which we assume to be neighboring.
We refer to these edges as \emph{$(s,t)$-edges}.

\begin{figure}[t]
	\centering
	\subfloat[\label{fig:construction_1}{}]
	{\includegraphics[page=4, scale=0.7]{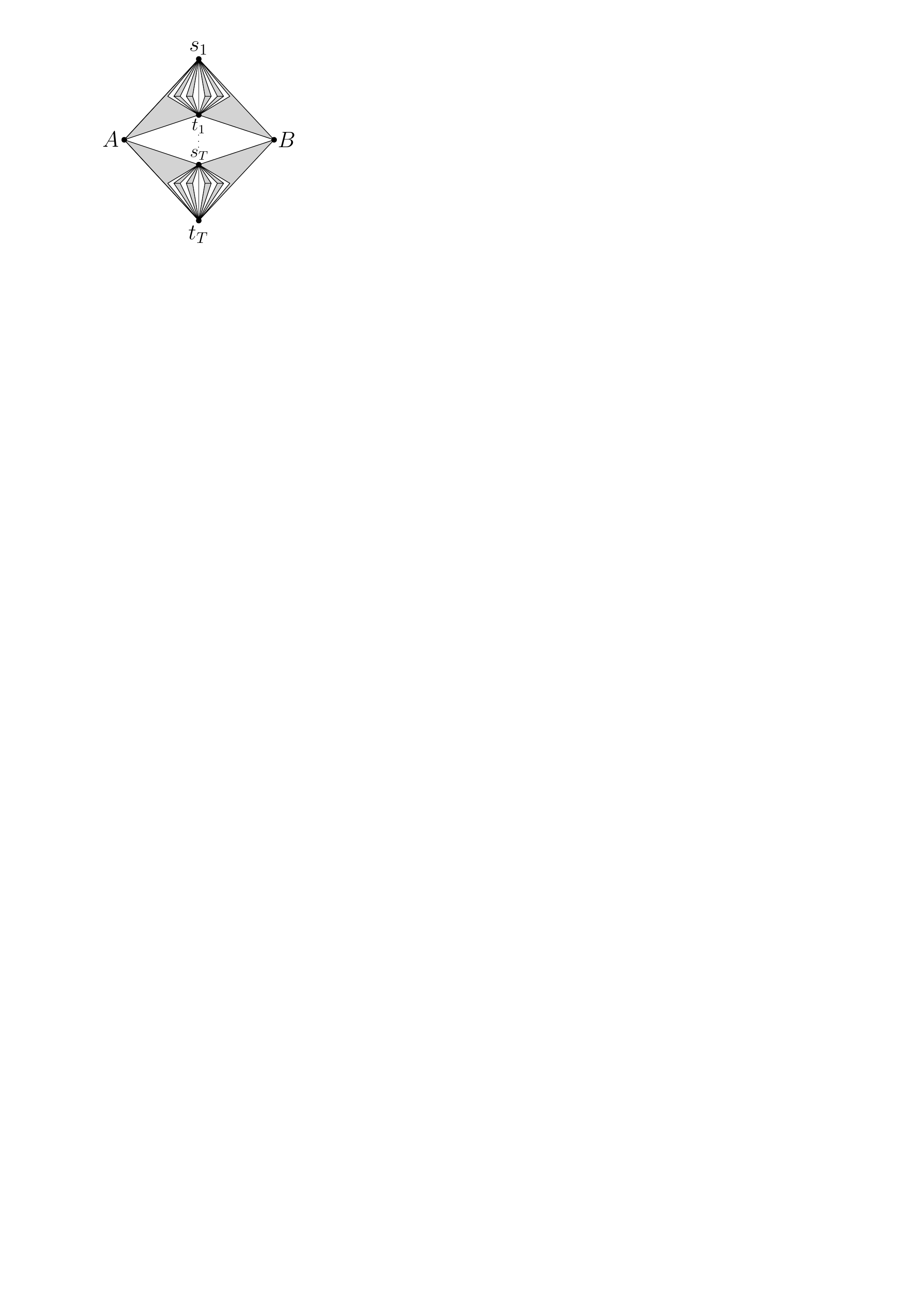}}
	\hfil
	\subfloat[\label{fig:construction_2}{}]
	{\includegraphics[page=2, scale=0.7]{construction}}
	\caption{Construction of graph $G_T$: Each gray subgraph in (a) corresponds to a copy of the graph of (b).}
	\label{fig:construction}
\end{figure}

As a next step, we \emph{stellate} each triangle $\langle A,s_i,t_i \rangle$ with a vertex $x_i$, that is, we introduce vertex $x_i$ and connect it to $A$, $s_i$, and $t_i$.
Symmetrically, we also stellate each triangle $\langle B,s_i,t_i \rangle$ with a vertex $y_i$. 
Afterwards, we add one more level, that is, we stellate each of the triangles 
$\langle A,s_i,t_i \rangle$, 
$\langle B,s_i,t_i \rangle$,
$\langle A,x_i,s_i \rangle$, 
$\langle A,x_i,t_i \rangle$, 
$\langle B,y_i,s_i \rangle$ and
$\langle B,y_i,t_i \rangle$  
with vertices 
$\alpha_i$,
$\beta_i$, 
$p_i$, 
$q_i$, 
$u_i$ and
$v_i$, 
respectively; see Fig.~\ref{fig:construction_2}.
We further stellate $\langle s_i, t_i,\alpha_i\rangle$ with $\alpha'_i$ and then $\langle s_i,t_i,\alpha'_i\rangle$ with $\alpha''_i$. 
Symmetrically, we stellate $\langle s_i, t_i, \beta_i \rangle$ with $\beta'_i$ and $\langle s_i, t_i, \beta'_i \rangle$ with $\beta''_i$. 

Let $G_T$ be the graph constructed so far. 
We refer to vertices $A$ and $B$ as the \emph{poles} of $G_T$ and we assume that $G_T$ admits a $3$-queue layout $\Qh$. 
By symmetry, we may assume that $A \prec B$ and that $s_i \prec t_i$ for each edge $(s_i, t_i)$.
Consider a single edge $(s_i, t_i)$ and the relative order of its endvertices to $A$ and $B$. 
Then, there exist six possible permutations:
\begin{inparaenum}[(P.1)]
\item \label{p0} $s_i \prec A \prec B \prec t_i$, 
\item \label{p1} $A \prec s_i \prec B \prec t_i$,
\item \label{p2} $s_i \prec A \prec t_i \prec B$,
\item \label{p4} $A \prec B \prec s_i \prec t_i$, 
\item \label{p5} $s_i, \prec t_i \prec A \prec B$, and
\item \label{p6} $A \prec s_i \prec t_i \prec B$. 
\end{inparaenum}

By the pigeonhole principle and by setting $T = 6l$, we may claim that at least one of the permutations P.\ref{p0}-P.\ref{p6} applies to at least $l$ edges.
We will show that if too many $(s,t)$-edges share one of the permutations P.\ref{p0}-P.\ref{p5}, then there exists a $4$-rainbow, 
contradicting the fact that $\Qh$ is a $3$-queue layout for $G_T$. 
This implies that if $T$ is large enough, then for at least one $(s,t)$-edge of $G_T$ permutation~P.\ref{p6} applies.
Based on this fact, we describe later how to augment the graph that we have constructed so far
using a recursive construction such that we can also rule out permutation~P.\ref{p6}.
Thereby, proving the claimed lower bound of four.
We start with an auxiliary lemma.

\begin{lemma}
In every queue that contains $r^2$ independent edges, there exists either an $r$-twist or an $r$-necklace.
\end{lemma}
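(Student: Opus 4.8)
\emph{Proof plan.} The plan is to extract the desired structure from the Erd\H{o}s--Szekeres / Mirsky dichotomy for finite posets. Fix a queue and let $E$ be a set of $r^2$ pairwise independent edges assigned to it; since the edges are pairwise independent, their $2r^2$ endpoints are all distinct. Write each edge as $e=(s_e,t_e)$ with $s_e \prec t_e$, and define a relation $\prec^*$ on $E$ by setting $e \prec^* e'$ whenever $t_e \prec s_{e'}$, that is, $e$ lies entirely to the left of $e'$ in the linear order. First I would check that $\prec^*$ is a strict partial order: irreflexivity is immediate, and transitivity follows from $t_e \prec s_{e'} \prec t_{e'} \prec s_{e''}$. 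The crucial step is to characterize comparability via the queue property: for distinct $e,e' \in E$ with $s_e \prec s_{e'}$, the fact that $e$ and $e'$ are independent and lie in the same queue rules out the nesting $s_e \prec s_{e'} \prec t_{e'} \prec t_e$, so $t_e \prec t_{e'}$; hence either $t_e \prec s_{e'}$, giving $e \prec^* e'$, or $s_{e'} \prec t_e$, in which case $s_e \prec s_{e'} \prec t_e \prec t_{e'}$, i.e.\ $e$ and $e'$ ``cross''. Thus any two edges of $E$ are either $\prec^*$-comparable or cross, and a $\prec^*$-antichain is precisely a set of pairwise crossing edges.

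Next I would invoke the poset dichotomy in its quantitative form: by the standard ``longest chain ending at $e$'' colouring (Mirsky's theorem), a finite poset whose longest chain has $h$ elements decomposes into $h$ antichains, so a poset whose longest chain and largest antichain both have at most $r-1$ elements has at most $(r-1)^2$ elements. Since $|E| = r^2 > (r-1)^2$, the poset $(E,\prec^*)$ contains either a chain of $r$ elements or an antichain of $r$ elements. It then remains to translate these back. A chain $e_1 \prec^* \cdots \prec^* e_r$ satisfies $t_{e_i} \prec s_{e_{i+1}}$ for all $i$, so together with $s_{e_i} \prec t_{e_i}$ the induced linear order on these $2r$ endpoints is $[s_{e_1}, t_{e_1}, \ldots, s_{e_r}, t_{e_r}]$, an $r$-necklace. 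An antichain $\{e_1,\ldots,e_r\}$, relabelled so that $s_{e_1} \prec \cdots \prec s_{e_r}$, consists of pairwise crossing edges by the characterization above, hence $s_{e_i} \prec s_{e_j} \prec t_{e_i} \prec t_{e_j}$ for all $i<j$; in particular $s_{e_r} \prec t_{e_1}$, so the induced order is $[s_{e_1},\ldots,s_{e_r},t_{e_1},\ldots,t_{e_r}]$, an $r$-twist. This establishes the lemma.

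I do not expect a genuine obstacle here: the argument is short, and the only points needing care are (i) confirming that it is exactly the no-nesting property of a queue that forces $\prec^*$-incomparable pairs to cross rather than nest --- this is the sole use of the hypothesis that the edges share a queue --- and (ii) using the sharp form of the chain/antichain dichotomy, so that $r^2$ elements genuinely force a monochromatic object of size $r$ (and not merely $r-1$). Everything else is routine bookkeeping about positions in the linear order.
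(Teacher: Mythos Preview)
Your proof is correct, but it follows a different route from the paper's. The paper argues directly and iteratively: assuming no $r$-twist, it orders the $r^2$ edges by their left endpoints, observes that the queue property forces the right endpoints to be ordered the same way, and then notes that $s_r$ must lie to the right of $t_1$ (otherwise the first $r$ edges would already be an $r$-twist). Hence $(s_1,t_1)$ lies entirely to the left of $(s_r,t_r)$; removing the first $r-1$ edges and repeating the argument $r-1$ more times yields the $r$-necklace $(s_1,t_1),(s_r,t_r),\ldots,(s_{(r-1)^2+1},t_{(r-1)^2+1})$.

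Your approach recasts the problem as a poset dichotomy: you define the ``completely to the left of'' order $\prec^*$, show that the queue (no-nesting) hypothesis makes $\prec^*$-incomparability coincide with crossing, and then invoke Mirsky's theorem to find a chain or antichain of size $r$ among the $r^2>(r-1)^2$ edges. This is a cleaner conceptual explanation of why the bound is $r^2$: it is literally the Erd\H{o}s--Szekeres threshold for chains versus antichains. The paper's argument is more elementary and self-contained (no appeal to Mirsky), while yours makes the underlying structure transparent and would generalize more readily. Both are short and valid; the key content --- that in a queue any two independent edges either cross or are disjoint in the order --- is the same in each.
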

\begin{proof}
Assume that no $r$-twist exists, as otherwise the lemma holds.
We will prove the existence of an $r$-necklace.
Let $(s_1, t_1), \ldots, (s_{r^2}, t_{r^2})$ be the $r^2$ independent edges.
Assume w.l.o.g.\ that $s_i \prec s_{i+1}$ for each $i=1,\ldots,r^2-1$.
Consider the edge $(s_1, t_1)$. 
Since $s_1$ is the first vertex in the order and no two edges nest, 
each vertex $t_i$, with $i>1$, is to the right of $t_1$.
Since no $r$-twist exists, vertex $s_r$ is to the right of $t_1$.
Thus, $(s_1,t_1)$ and $(s_r,t_r)$ do not cross.
The removal of $(s_1,t_1),\ldots,(s_{r-1},t_{r-1})$ makes $s_r$ first. 
By applying this argument $r-1$ times, 
we obtain that $(s_1,t_1), (s_r,t_r), \ldots \left(s_{(r-1)^2+1}, t_{(r-1)^2+1}\right)$ form an $r$-necklace.
\end{proof}
Applying the pigeonhole principle to a $k$-queue layout, we obtain the following.
\begin{corollary}\label{cor:twist-neck}
Every $k$-queue layout with at least $kr^2$ independent edges contains at least one $r$-twist or at least one $r$-necklace.
\end{corollary}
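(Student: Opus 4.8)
The plan is to derive this directly from the preceding lemma by a single application of the pigeonhole principle. Suppose we are given a $k$-queue layout, with queues $\Qh_1,\dots,\Qh_k$, together with a set $F$ of at least $kr^2$ pairwise independent edges. Assigning each edge of $F$ to the queue it belongs to partitions $F$ into subsets $F_1,\dots,F_k$ with $F_j \subseteq \Qh_j$. Since $|F| \ge kr^2$, the pigeonhole principle yields an index $j$ for which $|F_j|\ge r^2$.

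Next I would observe that the edges of $F_j$ remain pairwise independent, being a subset of $F$; hence the single queue $\Qh_j$ contains at least $r^2$ independent edges. Applying the preceding lemma to this queue, we obtain either an $r$-twist or an $r$-necklace formed by edges of $F_j$. As all edges of $F_j$ lie in $\Qh_j$, this twist (respectively, necklace) is a twist (respectively, necklace) of the layout, which is exactly the assertion of the corollary.

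I do not expect any genuine obstacle here: the statement is, quite literally, an immediate corollary of the lemma. The only point to record with care is that both the property of being independent and the property of forming an $r$-twist or an $r$-necklace depend only on the linear order of the vertices, not on the queue assignment; consequently, restricting attention from all of $F$ to the portion $F_j$ lying in a single queue loses no structure. Everything else is bookkeeping.
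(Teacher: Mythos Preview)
Your proposal is correct and matches the paper's approach exactly: the paper derives the corollary in one line by applying the pigeonhole principle to the preceding lemma.
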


We exploit this result for permutations P.\ref{p0}-P.\ref{p6} as follows. 
Recall that $\Qh$ is a $3$-queue layout for $G_T$. 
So, if we set $T = 18r^2$ for an $r > 0$ of our choice, then at least $3r^2$ $(s,t)$-edges of $G_T$ share the same permutation. 
Moreover, these edges are by construction independent. 
Therefore, by Corollary~\ref{cor:twist-neck} at least $r$ of them form a necklace or a twist (while also sharing the same permutation).
In the following, we show that if $r$ $(s,t)$-edges, say w.l.o.g.~$(s_1,t_1),\ldots,(s_r,t_r)$, form a necklace or a twist (for an appropriate choice of $r$) and simultaneously share one of the permutations P.\ref{p0}-P.\ref{p5}, then a $4$-rainbow is inevitably induced, which contradicts the fact that $\Qh$ is a $3$-queue layout.
We consider each case separately.

\begin{csp}[\ref*{p0}] \label{csp:0}
Let $r=8$.
It suffices to consider the case, in which $(s_1,t_1),\ldots,(s_8,t_8)$ form a twist, since in general for $r>1$ the necklace case is impossible.
Hence, the order is $[ s_1 \ldots s_8 AB t_1 \ldots t_8]$.
We show that $x_4$ always yields a 4-rainbow; 
Fig.~\ref{fig:sabt} shows the three subcases arising when $x_4$ is such that $x_4 \prec B$~holds. 
Clearly, each yields a $4$-rainbow.
Since we did not use the edge $(x_4, A)$, by symmetry, a $4$-rainbow is also obtained when $B \prec x_4$.
\end{csp}

\begin{csp}[\ref*{p1}] \label{csp:1}
As in the previous case, we set $r=8$ and we only consider the case, in which $(s_1,t_1),\ldots,(s_8,t_8)$ form a twist, since the necklace case is again impossible.
Hence, the order is $[A s_1 \ldots s_8 B t_1 \ldots t_8]$.
One may verify that placing $x_4$ and $x_5$ to the left of $t_8$ always results in a 4-rainbow 
(see~\arxapp{Appendix~\ref{app:lower-bound}}{the full version~\cite{arxiv}} for details).
For the case in which $x_4$ and $x_5$ are preceded by $t_8$, we distinguish between if $x_4 \prec x_5$ holds or not.
Both result in a $4$-rainbow.
\end{csp}

\begin{csp}[\ref*{p2}] \label{csp:2}
This case can be ruled out like Case~P.\ref*{p1} due to symmetry.
\end{csp}

\begin{figure}[t]
	\centering
	\subfloat[\label{fig:sabt_1}{$x_4 \prec s_3$}]
	{\includegraphics[page=1, scale=0.8]{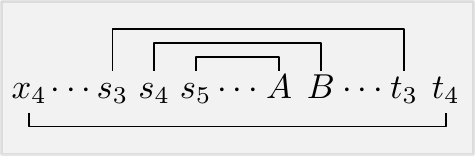}}
	\hfil
	\subfloat[\label{fig:sabt_2}{$s_3 \prec x_4 \prec A$}]
	{\includegraphics[page=2, scale=0.8]{sABt_twist}}
	\hfil
	\subfloat[\label{fig:sabt_3}{$A \prec x_4 \prec B$}]
	{\includegraphics[page=3, scale=0.8]{sABt_twist}}
	\caption{Illustration for the Case~P.\ref*{p0} when $x_4 \prec B$ holds.}
	\label{fig:sabt}
\end{figure}

\begin{csp}[\ref*{p4}] \label{csp:4}
Let $r=10$. We distinguish two subcases based on whether the edges $(s_1,t_1),\ldots,(s_{10},t_{10})$ form a twist or a necklace 
(in contrast to the previous case, here both cases are possible). 

We start with the twist case. 
Hence, the order is $[AB s_1 \ldots s_{10} t_1 \ldots t_{10}]$.
Let $Z_{4\ldots7} = \{ x_4, \ldots, x_7\} \cup \{y_4, \ldots y_7 \}$ and let $z_{4\ldots7}$ be any element of $Z_{4\ldots7}$.
Similar to the previous case, we sweep from left to right and rule out easy subcases. 
However, we have to ensure that we do not use any edge from $z_{4\ldots7}$ to $A$ or $B$ in order
to keep the roles of $x_i$ and $y_i$ interchangeable. 
Fig.~\ref{fig:abst_twist} shows that we may assume that $t_9 \prec z_{4\ldots7} $, that is, 
all $x_4, \ldots, x_7$ and $y_4, \ldots, y_7$ are preceded by $t_9$.

\begin{figure}[t]
	\centering
	\subfloat[\label{fig:abst_twist_1}{$z_{4\ldots7} \prec A$}]
	{\includegraphics[page=1, scale=0.7]{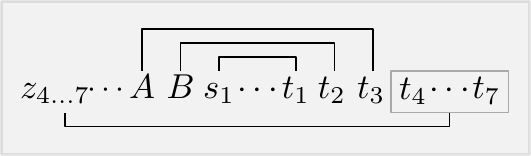}}
	\hfil
	\subfloat[\label{fig:abst_twist_2}{$A \prec z_{4\ldots7} \prec B$}]
	{\includegraphics[page=2, scale=0.7]{ABst_twist_final}}
	\hfil
	\subfloat[\label{fig:abst_twist_3}{$B \prec z_{4\ldots7} \prec s_3$}]
	{\includegraphics[page=3, scale=0.7]{ABst_twist_final}}
	\hfil
	\subfloat[\label{fig:abst_twist_4}{$s_3 \prec z_{4\ldots7} \prec t_3$}]
	{\includegraphics[page=4, scale=0.73]{ABst_twist_final}}
	\hfil
	\subfloat[\label{fig:abst_twist_5}{$t_3 \prec z_{4\ldots7} \prec t_8$}]
	{\includegraphics[page=5, scale=0.73]{ABst_twist_final}}
	\hfil
	\subfloat[\label{fig:abst_twist_6}{$t_8 \prec z_{4\ldots7} \prec t_9$}]
	{\includegraphics[page=6, scale=0.73]{ABst_twist_final}}
	\caption{Illustration for the Case~P.\ref*{p4} when $z_{4\ldots7} \prec t_9$ holds.}
	\label{fig:abst_twist}
\end{figure}

Next, we show that we can always construct a $3$-rainbow spanning $(s_8, t_8)$, which then yields the desired $4$-rainbow. 
Let us take a closer look at the ordering of the 8 vertices in $Z_{4\ldots7}$.
To prevent the creation of a $3$-rainbow that spans $(s_8, t_8)$, 
we claim that the ordering has to comply with two requirements:
\begin{inparaenum}[(R.1)]
\item \label{r1} the indices of the first 7 elements of $Z_{4\ldots7}$ are non-decreasing, and 
\item \label{r2} for the last 7 elements of $Z_{4\ldots7}$, it must hold that all $x$ precede all $y$. 
\end{inparaenum}
Assume to the contrary, that R.\ref{r1} does not hold. 
Hence, there exists a pair of vertices, say w.l.o.g $x_j \prec x_i$, with $i <  j$ and $x_i$ is not the last element of $Z_{4\ldots7}$. 
Then, $[s_i \ldots s_j \ldots x_j \ldots x_i]$ forms a $2$-rainbow and together with the last element of $Z_{4\ldots7}$ that is adjacent to either $A$ or $B$, we obtain a $3$-rainbow spanning $(s_8, t_8)$; a contradiction.
Assume now that R.\ref{r2} does not hold. 
Then, there exists a pair $y_i \prec x_j$ with $y_i$ not being the first element.
Let the first element be $x_l$. 
Then, $[A \ldots B \ldots s_l  \ldots x_l \ldots y_i \ldots x_j]$ is a $3$-rainbow spanning $(s_8, t_8)$; a contradiction.

Now, we show that R.\ref{r1} and R.\ref{r2} cannot simultaneously hold, which implies the existence of a $4$-rainbow. 
Consider the last element of $Z_{4\ldots7}$. 
Assume that~R.\ref{r1} and R.\ref{r2} both hold. 
By~R.\ref{r2}, we may deduce that the last three elements of $Z_{4\ldots7}$ belong to $\{ y_4, \ldots y_7 \}$.
Let them be $y_i,y_j,y_\ell$ as they appear from left to right. 
Then, by~R.\ref{r1} we have that $i<j$. 
Consider now $x_j$. By~R.\ref{r1}, $y_i \prec x_j$ must hold.
This contradicts the fact that $y_i,y_j,y_\ell$ are the last three elements of $Z_{4\ldots7}$.

We continue with the necklace case. Here, the order is $[AB s_1 t_1 \ldots s_{10} t_{10}]$.
We make several observations about the ordering in the form of propositions; 
their formal proofs can be found in \arxapp{Appendix~\ref{app:lower-bound}}{the full version~\cite{arxiv}}.
\begin{prop}
\label{prop:1}
Let $w$ be a neighbor of $s_i$ and $t_i$ for $3\le i\le 8$. 
Then, either $s_{i-1} \prec w \prec t_{i+1}$ holds, or $s_{10}\prec w$.
\end{prop}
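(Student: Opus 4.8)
The plan is to argue by contraposition. Fix the $3$-queue layout $\Qh$ and a common neighbor $w$ of $s_i$ and $t_i$ with $3\le i\le 8$; as in the intended applications, $w$ is one of the stellating vertices, so $w\notin\{A,B\}$ and $w\notin\{s_1,t_1,\dots,s_{10},t_{10}\}$. I will show that if $w\prec s_{i-1}$, or if $t_{i+1}\prec w\prec s_{10}$, then $\Qh$ contains a $4$-rainbow, contradicting that it is a $3$-queue layout; negating these two possibilities gives exactly the stated conclusion. Throughout I use only the necklace order $[A,B,s_1,t_1,\dots,s_{10},t_{10}]$ on the relevant vertices, the edges $(s_j,t_j)$ for all $j$, the edges from $A$ and from $B$ to every $s_j$ and every $t_j$, and the two edges $(w,s_i),(w,t_i)$. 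The bounds $3\le i\le 8$ guarantee that $s_{i-2},t_{i-2}$ and $s_{i+1},t_{i+1}$ exist and, together with $A$, $B$, $s_{10}$, $t_{10}$ and the distinctness of $w$, that all endpoints appearing below are pairwise distinct (so the four edges in each case are pairwise independent).

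Since $w\ne A,B$, the region $w\prec s_{i-1}$ decomposes into $w\prec A$, $A\prec w\prec B$, and $B\prec w\prec s_{i-1}$, and there remains the region $t_{i+1}\prec w\prec s_{10}$. In each case I would exhibit four edges that are nested in the listed order (outermost first), which is a $4$-rainbow:
\begin{inparaenum}[(i)]
\item if $w\prec A$: \ $(w,t_i)$, $(A,s_i)$, $(B,t_{i-1})$, $(s_{i-2},t_{i-2})$;
\item if $A\prec w\prec B$: \ $(A,t_{i+1})$, $(w,t_i)$, $(B,s_i)$, $(s_{i-1},t_{i-1})$;
\item if $B\prec w\prec s_{i-1}$: \ $(A,t_{i+1})$, $(B,s_{i+1})$, $(w,t_i)$, $(s_{i-1},t_{i-1})$, where one uses $t_i\prec s_{i+1}$ from the necklace order to nest $(w,t_i)$ inside $(B,s_{i+1})$;
\item if $t_{i+1}\prec w\prec s_{10}$: \ $(A,t_{10})$, $(B,s_{10})$, $(s_i,w)$, $(s_{i+1},t_{i+1})$, where $s_i\prec w$ because $w$ follows $t_{i+1}$.
\end{inparaenum}
For each list, a direct check in the order $[A,B,s_1,t_1,\dots,s_{10},t_{10}]$ shows that the interval spanned by each edge strictly contains the interval spanned by the next, and that every left endpoint precedes every right endpoint; hence the four edges form a $4$-rainbow and we are done.

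The computational content is just these nesting verifications, which are routine. The one point that needs care is the case split around $A$ and $B$: whether $(w,t_i)$ ends up nested inside or outside the $B$-edge flips as $w$ crosses $B$, which is why cases~(ii) and~(iii) use different quadruples. It is also worth noting why the conclusion must include the escape clause $s_{10}\prec w$ rather than ruling $w$ out on the right entirely: once $s_{10}\prec w$, the $A$- and $B$-stars no longer extend far enough to the right of $(s_i,w)$ to close a fourth level of nesting, so no $4$-rainbow is forced, and indeed such placements of $w$ genuinely occur. I expect the case split of the second paragraph to be the main (and essentially only) obstacle.
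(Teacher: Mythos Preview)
Your proof is correct and takes essentially the same approach as the paper: both split into the same four regions for $w$ (namely $w\prec A$, $A\prec w\prec B$, $B\prec w\prec s_{i-1}$, and $t_{i+1}\prec w\prec s_{10}$) and in each exhibit a $4$-rainbow built from one $w$-edge, two pole edges from $A$ and $B$, and one $(s_j,t_j)$-edge. The specific quadruples you choose differ slightly from the paper's (you work locally near index $i$ throughout, whereas the paper anchors the inner edges at $s_1,t_1,s_2,t_2$ in the first two cases), but this is purely cosmetic.
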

\begin{prop}
\label{prop:2}
Let $w$ and $z$ be two vertices that form a $K_4$ with $s_i$ and $t_i$, for $3\le i\le 8$. 
Then, at least one of the following holds: $s_{10}\prec w$ or $s_{10}\prec z$.
\end{prop}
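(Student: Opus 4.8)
The plan is to argue by contradiction: assume that neither $s_{10} \prec w$ nor $s_{10} \prec z$ holds, and derive a $4$-rainbow, contradicting the assumption that $\Qh$ is a $3$-queue layout. First I would note that $w$ and $z$ are distinct from $s_{10}$, since each of them is adjacent to $s_i$ and $t_i$ with $3 \le i \le 8$ whereas $s_{10}$ is adjacent to neither; hence the assumption actually says $w \prec s_{10}$ and $z \prec s_{10}$. Since $w$ and $z$ are neighbors of $s_i$ and $t_i$, Proposition~\ref{prop:1} applies to each, and as the alternative $s_{10} \prec w$ (resp.\ $s_{10} \prec z$) is excluded, we obtain $s_{i-1} \prec w \prec t_{i+1}$ and $s_{i-1} \prec z \prec t_{i+1}$.

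Next I would exploit the $K_4$ on $\{w,z,s_i,t_i\}$. Let $v_1 \prec v_2 \prec v_3 \prec v_4$ be these four vertices in the order of $\Qh$. Since all six edges among them are present, in particular $(v_1,v_4)$ and $(v_2,v_3)$ are edges of $G_T$, and they are nested, i.e.\ they already form a $2$-rainbow. To extend this to a $4$-rainbow I would wrap the pole edges $(B,t_9)$ and $(A,t_{10})$ around them; these exist because $A$ and $B$ are adjacent to every $t_j$. It then remains to verify the inequalities $A \prec B \prec v_1$ and $v_4 \prec t_9 \prec t_{10}$: every vertex of $\{w,z,s_i,t_i\}$ lies strictly between $s_{i-1}$ and $t_{i+1}$ (for $w,z$ by the previous paragraph, for $s_i,t_i$ by the necklace order $[A B s_1 t_1 \dots s_{10} t_{10}]$), while $B \prec s_1 \preceq s_{i-1}$ and $t_{i+1} \preceq t_9 \prec t_{10}$ because $i \le 8$. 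One also checks that $(A,t_{10})$, $(B,t_9)$, $(v_1,v_4)$, $(v_2,v_3)$ are pairwise independent, which holds since $A,B,t_9,t_{10}$ are all distinct from $w,z,s_i,t_i$ (again using $i \le 8$). Thus $[A, B, v_1, v_2, v_3, v_4, t_9, t_{10}]$ exhibits a $4$-rainbow, the desired contradiction.

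The step that needs the most care is the very use of the full $K_4$ hypothesis: it is precisely the presence of both ``diagonal'' edges $(v_1,v_4)$ and $(v_2,v_3)$, for every possible interleaving of $w$ and $z$ with $s_i$ and $t_i$, that lets the argument dispense with a case analysis — knowing only that $w$ and $z$ are common neighbors of $s_i$ and $t_i$ would not be enough. Everything else reduces to the routine check of the endpoint inequalities above, which is uniform once the four vertices are taken in sorted order.
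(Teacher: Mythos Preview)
Your proof is correct and follows essentially the same approach as the paper: both apply Proposition~\ref{prop:1} to force $w$ and $z$ into the interval $(s_{i-1},t_{i+1})$, observe that the $K_4$ on $\{w,z,s_i,t_i\}$ yields a $2$-rainbow in any order, and then wrap two pole edges around it to obtain a $4$-rainbow. The only cosmetic difference is that the paper uses $(A,t_{10})$ and $(B,s_{10})$ as the outer pair while you use $(A,t_{10})$ and $(B,t_9)$; both choices work.
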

\begin{prop}
\label{prop:3}
Let $w$, $z$ be neighbors of both $s_i$, $t_i$, for $3\le i\le 8$. 
Then, at most one of $w$ and $z$ is between $s_{i-1}$ and $s_i$ or between $t_i$ and $t_{i-1}$. 
Furthermore, if one of $w$ and $z$ is between $s_{i-1}$ and $s_i$ or between $t_i$ and $t_{i-1}$, then the other is not between $s_i$ and $t_i$.
\end{prop}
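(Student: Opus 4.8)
The plan is to show that a failure of either assertion forces a $4$-rainbow in $\Qh$, contradicting the fact that $\Qh$ is a $3$-queue layout. Throughout I work in the necklace order $A \prec B \prec s_1 \prec t_1 \prec \ldots \prec s_{10} \prec t_{10}$, so that for $3 \le i \le 8$ we have $A \prec B \prec s_{i-1}$ and $t_i \prec s_{i+1} \prec t_{i+1}$. First observe that if $w$ or $z$ is a pole it precedes $s_1$, hence lies in none of the intervals named in the statement, so there is nothing to prove; I may therefore assume $w,z \notin \{A,B\}$. Since in the configurations considered below $w$ and $z$ each lie in one of the \emph{open} intervals $(s_{i-1},s_i)$ or $(s_i,t_i)$, and these are strictly to the left of $s_{i+1}\prec t_{i+1}$, the eight vertices $A, B, w, z, s_i, t_i, s_{i+1}, t_{i+1}$ are automatically pairwise distinct (also $w\neq z$).

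The crux is a fixed ``outer'' pair of edges of $G_T$: $(A, t_{i+1})$ and $(B, s_{i+1})$, which exist because $A$ and $B$ are adjacent to every $s_j$ and $t_j$. They form a $2$-rainbow $(A, t_{i+1}) \supset (B, s_{i+1})$, since $A \prec B$ and $s_{i+1} \prec t_{i+1}$, and the inner edge over-reaches the whole block $[s_{i-1}, t_i]$ on both sides, since $B \prec s_{i-1}$ while $t_i \prec s_{i+1}$. Consequently it suffices, in each forbidden configuration, to exhibit one more $2$-rainbow that is nested inside $[s_{i-1}, t_i]$ and built from edges incident to $w$, $z$, $s_i$, $t_i$: stacking the two $2$-rainbows yields a $4$-rainbow.

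For the first assertion I assume $w, z \in (s_{i-1}, s_i)$ and, after relabelling, $w \prec z$; then $s_{i-1} \prec w \prec z \prec s_i \prec t_i$, so $(w, t_i)$ and $(z, s_i)$ are independent and nested with $[z, s_i] \subset [w, t_i]$, and the configuration is the $4$-rainbow $(A, t_{i+1}) \supset (B, s_{i+1}) \supset (w, t_i) \supset (z, s_i)$. For the second assertion I assume $w \in (s_{i-1}, s_i)$ and $z \in (s_i, t_i)$; then $s_{i-1} \prec w \prec s_i \prec z \prec t_i$, so $(w, t_i)$ and $(s_i, z)$ are independent with $[s_i, z] \subset [w, t_i]$, giving the $4$-rainbow $(A, t_{i+1}) \supset (B, s_{i+1}) \supset (w, t_i) \supset (s_i, z)$. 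In both cases the pairwise independence needed for a rainbow uses only $w \neq z$ and $w, z \notin \{A,B\}$, which we have; and the sub-case $w \sim z$ of the first assertion is moreover immediate from Proposition~\ref{prop:2}. (If the statement is additionally read as forbidding the mirror configuration on the $t$-side, it is handled by the analogous symmetric argument, using $(B,s_{i-1})$ and $(A,t_{i-1})$ in place of the outer pair.)

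I do not anticipate a genuine obstacle; the hypothesis $3 \le i \le 8$ is used only to guarantee that $s_{i-1}$, $s_{i+1}$ and $t_{i+1}$ exist in the necklace. The single step that takes a moment's thought is the choice of the outer pair: although $A$ and $B$ lie far to the left of everything, the necklace interleaving $t_i \prec s_{i+1} \prec t_{i+1}$ is precisely what makes $(A, t_{i+1})$ and $(B, s_{i+1})$ reach just past $t_i$ and nest correctly, so that whatever short nested pair of edges $w$ and $z$ create around $\{s_i, t_i\}$ can be completed to a $4$-rainbow.
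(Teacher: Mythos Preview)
Your approach is exactly the paper's: exhibit a $2$-rainbow among the edges joining $\{w,z\}$ to $\{s_i,t_i\}$, then nest it inside a fixed outer $2$-rainbow coming from the pole edges. The paper uses $(A,t_{10})\supset(B,s_{10})$ as the outer pair throughout (rather than your tighter $(A,t_{i+1})\supset(B,s_{i+1})$), and phrases the inner step as ``the $2$-rainbow formed by the $K_4$ on $w,z,s_i,t_i$''; your explicit naming of the two inner edges is in fact cleaner, since $w$ and $z$ need not be adjacent.

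One small slip: your suggested outer pair for the $t$-side mirror, $(A,t_{i-1})\supset(B,s_{i-1})$, does not work. Both right endpoints lie to the \emph{left} of $s_i$, so this pair cannot enclose a $2$-rainbow built from vertices in $(t_i,t_{i+1})$. Your original pair $(A,t_{i+1})\supset(B,s_{i+1})$ also fails there, since a vertex $z\in(t_i,t_{i+1})$ may sit to the right of $s_{i+1}$. The fix is simply to push the outer pair further right --- the paper's $(A,t_{10})\supset(B,s_{10})$ uniformly spans all of $(s_{i-1},t_{i+1})$ for $3\le i\le 8$, which is in fact why the upper bound $i\le 8$ is needed. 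Incidentally, the ``$t_{i-1}$'' in the statement is a typo for $t_{i+1}$ (as the subsequent use of the proposition confirms), which is likely the source of your uncertainty about the mirror case.
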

\begin{prop}
\label{prop:4}
For $4\le i\le 8$, each vertex from the set $\{x_i, y_i, p_i, q_i, u_i, v_i\}$ is between $s_{i-1}$ and $t_{i+1}$.
\end{prop}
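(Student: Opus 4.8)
The plan is to pin down first the two vertices of the set that are adjacent to \emph{both} endpoints of the necklace edge $(s_i,t_i)$, namely $x_i$ and $y_i$, and then to locate $p_i,q_i,u_i,v_i$ using that each of them is adjacent to $x_i$ or $y_i$ together with $s_i$ or $t_i$. Throughout, the order is $[A,B,s_1,t_1,\dots,s_{10},t_{10}]$.

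First I would handle $x_i$ (the vertex $y_i$ being symmetric, with $B$ in the role of $A$). As $x_i$ is adjacent to both $s_i$ and $t_i$ and $3\le i\le 8$, Proposition~\ref{prop:1} gives either $s_{i-1}\prec x_i\prec t_{i+1}$, which is the claim, or $s_{10}\prec x_i$, and it remains to exclude the latter. For this I would use the ``fans'' the construction puts over the necklace edges: over $(s_j,t_j)$ lie the mutually adjacent vertices $x_j,\alpha_j,\alpha'_j,\alpha''_j$ and $y_j,\beta_j,\beta'_j,\beta''_j$, all adjacent to both $s_j$ and $t_j$, so that each consecutive pair along the path on $x_j,\alpha_j,\alpha'_j,\alpha''_j$ (and along the path on $y_j,\beta_j,\beta'_j,\beta''_j$) spans a $K_4$ with $s_j,t_j$. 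By Proposition~\ref{prop:2} the vertices lying to the right of $s_{10}$ form a vertex cover of each of these two paths, hence at least two vertices of every fan lie to the right of $s_{10}$; ranging over all $j\in\{3,\dots,8\}$ this puts a large set of vertices past $s_{10}$, each adjacent to both endpoints of its necklace edge. Picking one such witness $w_j$ per index yields independent edges $(s_j,w_j)$ with $s_3\prec\dots\prec s_8\prec s_{10}\prec w_j$, and an Erd\H{o}s--Szekeres/pigeonhole argument (in the spirit of Corollary~\ref{cor:twist-neck}) extracts three of them that are pairwise nested; prepending $(s_{10},t_{10})$ or a suitable edge incident to $A$ or $B$ gives a $4$-rainbow, contradicting that $\Qh$ is a $3$-queue layout. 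Thus $s_{i-1}\prec x_i\prec t_{i+1}$ and, symmetrically, $s_{i-1}\prec y_i\prec t_{i+1}$.

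Next I would treat $p_i,q_i,u_i,v_i$ for fixed $4\le i\le 8$; it suffices to do $p_i$, since $q_i,u_i,v_i$ are obtained by the exchanges $s_i\leftrightarrow t_i$ and $A\leftrightarrow B$ (the latter swapping $x_i\leftrightarrow y_i$). By construction $p_i$ is adjacent to $x_i$ and to $s_i$, both of which lie in $(s_{i-1},t_{i+1})$ by the previous step. Assume $p_i\notin(s_{i-1},t_{i+1})$. If $p_i\prec s_{i-1}$, then --- using $A\prec B\prec p_i$ --- the edges $(A,t_{i+1})$, $(B,t_i)$, $(p_i,s_i)$ and $(s_{i-1},t_{i-1})$ are independent and, in this order, pairwise nested, a $4$-rainbow. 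If $p_i\succ t_{i+1}$, then $(s_i,p_i)$ nests $(s_{i+1},t_{i+1})$, and combining it with an edge $(B,t_k)$ that contains $p_i$ and with the outer edge $(A,t_{k+1})$ again yields four pairwise-nested edges (with the case $p_i\succ t_9$ needing, instead of $(B,t_k)$, an edge to one of the vertices forced to the right of $s_{10}$ in the previous step). In all cases $\Qh$ would contain a $4$-rainbow, so $s_{i-1}\prec p_i\prec t_{i+1}$.

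The main obstacle is excluding the alternative $s_{10}\prec x_i$ in the first step: Proposition~\ref{prop:2} does not position $x_i$ on its own, so one has to aggregate it over the fans of several necklace edges and then select the edges and the nested sub-family carefully --- in particular handling separately the vertices pushed past $t_{10}$, where the natural outer edges of the rainbow are no longer at hand --- in order to guarantee a rainbow of length exactly four. Once $x_i$ and $y_i$ are located, the argument for $p_i,q_i,u_i,v_i$ is a short case analysis along the lines sketched above.
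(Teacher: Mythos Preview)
Your first step has a real gap. The Erd\H{o}s--Szekeres argument you sketch for excluding $s_{10}\prec x_i$ never uses that hypothesis: the witnesses $w_j$ past $s_{10}$ exist by Proposition~\ref{prop:2} for every $j\in\{3,\dots,8\}$ regardless of where $x_i$ lies, so if the construction you describe produced a $4$-rainbow it would do so in \emph{every} $3$-queue layout of the necklace configuration, which is clearly too strong. And indeed it does not: the six edges $(s_j,w_j)$ have ordered left endpoints, but Erd\H{o}s--Szekeres (and likewise Corollary~\ref{cor:twist-neck}, which concerns twists versus necklaces, not rainbows) only guarantees a monotone triple among the right endpoints --- possibly increasing, which gives a $3$-twist rather than a $3$-rainbow, and a twist plus an outer edge is not a rainbow. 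You flag this as the main obstacle but do not overcome it.

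The paper sidesteps the obstacle with two simplifications you are missing. First, \emph{every} $w\in\{x_i,y_i,p_i,q_i,u_i,v_i\}$ carries both a ``long'' edge to one of $A,B$ and a ``short'' edge to one of $s_i,t_i$ (e.g.\ $p_i$ stellates $\langle A,x_i,s_i\rangle$, hence is adjacent to $A$ and to $s_i$); thus all six vertices can be handled at once via Proposition~\ref{prop:1}, and your separate case analysis for $p_i,q_i,u_i,v_i$ is unnecessary. Second, only the fan over the \emph{single} edge $(s_{i-1},t_{i-1})$ is used: the four pairs $(x_{i-1},\alpha_{i-1})$, $(\alpha'_{i-1},\alpha''_{i-1})$, $(y_{i-1},\beta_{i-1})$, $(\beta'_{i-1},\beta''_{i-1})$ each form a $K_4$ with $s_{i-1},t_{i-1}$, so Proposition~\ref{prop:2} places four vertices past $s_{10}$. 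Assuming $s_{10}\prec w$, two of these four, say $a\prec b$, lie on the same side of $w$; their edges to $s_{i-1}$ and $t_{i-1}$ together with $(s_{10},t_{10})$ yield a $3$-rainbow, and the long or short edge of $w$ (depending on whether $a\prec b\prec w$ or $w\prec a\prec b$) supplies the fourth nested edge. The hypothesis $s_{10}\prec w$ enters precisely here, through the edge incident to $w$ itself.
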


\begin{figure}[t]
	\centering
	\subfloat[\label{fig:abst_final_1}{$x_i\prec s_i\prec t_i\prec y_i$}]
	{\includegraphics[page=7, scale=0.8]{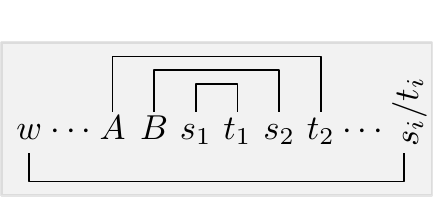}}
	\hfil
	\subfloat[\label{fig:abst_final_2}{$s_i\prec x_i\prec y_i\prec t_i$}]
	{\includegraphics[page=8, scale=0.8]{ABst_necklace}}
\caption{Contradiction for placing $x_i, y_i, p_i, q_i, u_i, v_i$ in range $(s_{i-1}, t_{i+1})$, $4\le i\le 8$.}
\label{fig:abst_final}
\end{figure}

By Proposition~\ref{prop:4}, for $4\le i\le 8$, each vertex from $\{s_i, t_i, x_i, y_i, p_i, q_i, u_i, v_i\}$ is in $(s_{i-1}, t_{i+1})$. Then, the edges between these vertices cannot form a $2$-rainbow, as otherwise this $2$-rainbow along with the two edges $(A, t_{10})$ and $(B, s_{10})$ would form a $4$-rainbow. Assume w.l.o.g.~that $x_i\prec y_i$. Then, by Proposition~\ref{prop:3}, one of the following two conditions hold: (i) $x_i\prec s_i\prec t_i\prec y_i$, (ii) $s_i\prec x_i\prec y_i\prec t_i$; see Fig.~\ref{fig:abst_final}. In both cases, $p_i$ must precede both $x_i$ and $s_i$, as otherwise either $(p_i, s_i), (x_i, t_i)$, or $(p_i, x_i), (s_i, t_i)$ would form a $2$-rainbow; see Fig.~\ref{fig:abst_final}. But then there is no valid position for $q_i$ without creating a $2$-rainbow in either case, resulting together with $(A,t_{10})$ and $(B, s_{10})$ in a $4$-rainbow.% This completes the description of Case~\ref*{p4}.
\end{csp}

\begin{csp}[\ref*{p5}] \label{csp:5}
This case can be ruled out like Case~P.\ref*{p4} due to symmetry.
\end{csp}

From the above case analysis it follows that if $r$ is at least $10$ (which implies that $T$ is at least 1,800), then for at least one $(s,t)$-edge of $G_T$ permutation~P.\ref{p6} applies, that is, there exists $1 \leq i_0 \leq T$ such that $A \prec s_{i_0} \prec t_{i_0} \prec B$. Notice that the edges $(A,B)$ and $(s_{i_0},t_{i_0})$ form a $2$-rainbow. 

We proceed by augmenting graph $G_T$ as follows. For each edge $(s_i,t_i)$ of $G_T$, we introduce a new copy of $G_T$, which has $s_i$ and $t_i$ as poles. Let $G'_T$ be the augmented graph and let $(s_1',t_1'),\ldots,(s_T',t_T')$ be the $(s,t)$-edges of the copy of graph $G_T$ in $G'_T$ corresponding to the edge $(s_{i_0},t_{i_0})$ of the original graph $G_T$. Then, by our arguments above there exists  $1 \leq i_0' \leq T$ such that $s_{i_0} \prec s_{i_0}' \prec t_{i_0}' \prec s_{i_0}$. Hence, the edges $(A,B)$, $(s_{i_0}',t_{i_0}')$ and $(s_{i_0},t_{i_0})$ form a $3$-rainbow, since $A \prec s_{i_0} \prec t_{i_0} \prec B$ holds. If we apply the same augmentation procedure to graph $G_T'$, then we guarantee that the resulting graph $G_T''$, which is clearly a subgraph of a planar $3$-tree, has inevitably a $4$-rainbow. 
Hence, either $G_T$ does not admit a $3$-queue layout, as we initially assumed, or $G_T''$ does not admit a $3$-queue layout. In both cases, Theorem~\ref{thm:lower-bound} follows.
%
%We stress here that graph $G_T''$ is rather large. 
%Since $T \geq 1,800$, the total number of $(s,t)$-edges in $G_T''$ is at least $1,800^3$.

\begin{theorem}\label{thm:lower-bound}
There exist planar 3-trees that have queue number at least $4$.
\end{theorem}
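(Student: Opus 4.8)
The plan is to exhibit a subgraph of a planar $3$-tree that admits no $3$-queue layout; by the characterization of Heath and Rosenberg it then suffices to force a $4$-rainbow in every linear order of the vertices. First I would build a ``base gadget'' $G_T$ with two distinguished adjacent \emph{poles} $A$ and $B$ together with $T$ pairwise independent \emph{$(s,t)$-edges} $(s_i,t_i)$, each completing the triangles $\langle A,s_i,t_i\rangle$ and $\langle B,s_i,t_i\rangle$, and then repeatedly \emph{stellate} these triangles and the new triangles they create (inserting $x_i,y_i$, then $\alpha_i,\beta_i,p_i,q_i,u_i,v_i$, and finally $\alpha'_i,\alpha''_i,\beta'_i,\beta''_i$). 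Since $G_T$, and any graph assembled from copies of it, is a subgraph of a planar $3$-tree, any lower bound on its queue number transfers to the whole class of planar $3$-trees.

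Next I would assume, for contradiction, that $G_T$ has a $3$-queue layout; by symmetry take $A\prec B$ and $s_i\prec t_i$. Each $(s,t)$-edge then falls into one of six classes P.\ref{p0}--P.\ref{p6} describing how $s_i,t_i$ interleave with $A,B$. Choosing $T$ large (say $T=18r^2$), the pigeonhole principle forces $\ge 3r^2$ of the $(s,t)$-edges into one class, hence $\ge r^2$ into a single queue; an auxiliary lemma stating that $r^2$ independent edges in one queue contain an $r$-twist or an $r$-necklace (which, via pigeonhole over the three queues, gives Corollary~\ref{cor:twist-neck}) then produces $r$ edges that share a permutation and form a twist or a necklace. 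The heart of the argument is a case analysis showing that for each of P.\ref{p0}--P.\ref{p5} such a configuration, together with suitably chosen stellation vertices, is forced to contain a $4$-rainbow --- a contradiction. Hence, for $T$ large enough, at least one $(s,t)$-edge obeys P.\ref{p6}, i.e.\ $A\prec s_{i_0}\prec t_{i_0}\prec B$, so that $(A,B)$ and $(s_{i_0},t_{i_0})$ already form a nested pair, that is, a $2$-rainbow.

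To upgrade the $2$-rainbow to a $4$-rainbow I would recurse: replace each $(s,t)$-edge by a fresh copy of $G_T$ whose poles are that edge's endpoints. Running the previous argument inside the copy glued onto $(s_{i_0},t_{i_0})$ yields one of its $(s,t)$-edges nested strictly inside $(s_{i_0},t_{i_0})$, hence a $3$-rainbow; a second round of the same augmentation then forces a $4$-rainbow in the final graph $G''_T$. As $G''_T$ is still a subgraph of a planar $3$-tree, at least one of $G_T$, $G'_T$, $G''_T$ fails to admit a $3$-queue layout, which proves the theorem.

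The main obstacle is the case analysis for permutation P.\ref{p4} (and, by symmetry, P.\ref{p5}), where $A\prec B$ precede all $s_i$ and $t_i$. After a left-to-right sweep removes the easy sub-cases, one must control the relative order of the eight vertices $x_4,\dots,x_7,y_4,\dots,y_7$ and of the vertices $p_i,q_i,u_i,v_i$. The twist sub-case hinges on two incompatible monotonicity requirements --- that the indices of the first seven of $\{x_4,\dots,x_7,y_4,\dots,y_7\}$ are non-decreasing, while among the last seven all $x$'s precede all $y$'s --- and the necklace sub-case hinges on several positional propositions pinning down where a common neighbor of $s_i,t_i$, or a pair of vertices forming a $K_4$ with them, may lie relative to $s_{i-1},s_i,t_i,t_{i+1}$ and $s_{10}$. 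Establishing these constraints precisely, and then verifying that every placement of $\{x_i,y_i,p_i,q_i,u_i,v_i\}$ compatible with them still yields a $2$-rainbow spanning $(s_8,t_8)$ --- equivalently a $4$-rainbow together with $(A,t_{10})$ and $(B,s_{10})$ --- is where the real work lies.
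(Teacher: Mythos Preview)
Your proposal is correct and follows essentially the same approach as the paper: the same base gadget $G_T$ with poles $A,B$ and stellation vertices $x_i,y_i,\alpha_i,\beta_i,p_i,q_i,u_i,v_i,\alpha'_i,\alpha''_i,\beta'_i,\beta''_i$, the same pigeonhole-plus-twist/necklace reduction (via the auxiliary lemma and Corollary~\ref{cor:twist-neck}) to force many $(s,t)$-edges into a common permutation class, the same case analysis eliminating P.\ref{p0}--P.\ref{p5} (including the two incompatible monotonicity constraints for the twist sub-case of P.\ref{p4} and the positional propositions for its necklace sub-case), and the same two-fold recursive augmentation $G_T\to G'_T\to G''_T$ to promote the guaranteed P.\ref{p6} nesting into a $4$-rainbow. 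One minor slip: in the necklace sub-case of P.\ref{p4} the forced $2$-rainbow lies inside the interval $(s_{i-1},t_{i+1})$ for some $4\le i\le 8$, not specifically spanning $(s_8,t_8)$; it is the twist sub-case that builds a $3$-rainbow spanning $(s_8,t_8)$.
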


\section{Conclusions}
\label{sec:conclusions}
% !TEX root = paper.tex

In this work, we presented improved bounds on the queue number of planar $3$-trees. Three main open problems arise from our work. The first one concerns the exact upper bound on the queue number of planar $3$-trees. Does there exist a planar $3$-tree, whose queue number is five (as our upper bound) or the queue number of every planar $3$-tree is four (as our lower bound example)? The second problem is whether the technique that we developed for planar $3$-trees can be extended so to improve the upper bound for the queue number of general (that is, non-planar) $k$-trees, which is currently exponential in $k$~\cite{Wie17}. Finally, the third problem is the central question in the area. Is the queue number of general planar graphs (that is, that are not necessarily planar $3$-trees) bounded by a constant?

\bibliographystyle{splncs03}
\bibliography{references}

\arxapp{
\clearpage
\appendix
\section*{\LARGE Appendix}

\section{Omitted Proofs from Section~\ref*{sec:upper-bound}}
\label{app:upper-bound}
% !TEX root = paper.tex

\rephrase{Theorem}{\ref*{thm:upper-bound}}{\upperbound}

\begin{proof}
Since M.\ref*{m:order} is clearly satisfied for $C_i$, it remains to prove that the assignment of the edges of $C_i$ to  $\Qh_0,\dots,\Qh_4$ is such that M.\ref*{m:level} and M.\ref*{m:bind} are satisfied.

Since the edges of $H_i$ are partitioned into level and binding, the endvertices of each edge are either in the same or in two consecutive intervals; in the former (latter) case, it is assigned to $\Qh_0$ or $\Qh_1$ (to $\Qh_2$, $\Qh_3$ or $\Qh_4$), since it is a level (binding) edge. Edges assigned to $\Qh_0$ and $\Qh_1$ cannot nest, as otherwise our two-level algorithm has computed an invalid assignment for the level edges of $c_i$ or an invalid assignment in $\Qh_0$ and $\Qh_1$ has been recursively computed for $d_1\ldots,d_k$. Similarly, any two (binding) edges of $\Qh_2$, $\Qh_3$ or $\Qh_4$ cannot be nested, if both bridge $c_i$ with the same component or with two different components of $L_{i+1}$, or both belong to the same component $d_j$, for some $j=1,\ldots,k$. 
It remains to prove that there exist no two nested edges of $\Qh_2$, $\Qh_3$ or $\Qh_4$ that belong to two different components $d_\mu$ and $d_\nu$ and their endvertices are in two consecutive intervals $p_j$ and $p_{j+1}$, where $1 \leq \mu,\nu \leq k$ and $j=i,\ldots,\lambda-1$. The former holds because of the two-level algorithm. The latter holds because all vertices of $d_\mu$ either precede or follow all vertices of $d_\nu$ in both $p_j$ and $p_{j+1}$ (by the choice of the relative order). So, M.\ref*{m:order}--M.\ref*{m:bind} are satisfied and the proof follows.
\end{proof}

\section{Omitted Proofs from Section~\ref*{sec:lower-bound}}
\label{app:lower-bound}
% !TEX root = paper.tex

\begin{figure}[b!]
	\centering
	\subfloat[\label{fig:asbt_1}{$x_4 \prec A$}]
	{\includegraphics[page=1, scale=0.75]{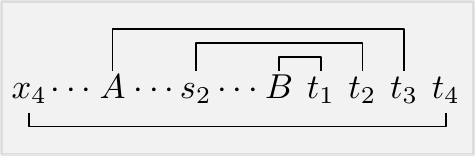}}
	\hfil
	\subfloat[\label{fig:asbt_2}{$A \prec x_4 \prec s_3$}]
	{\includegraphics[page=2, scale=0.75]{AsBt_twist}}
	\hfil
	\subfloat[\label{fig:asbt_3}{$s_3 \prec x_4 \prec B$}]
	{\includegraphics[page=3, scale=0.75]{AsBt_twist}}
	\hfil
	\subfloat[\label{fig:asbt_4}{$B \prec x_4 \prec t_6$}]
	{\includegraphics[page=4, scale=0.75]{AsBt_twist}}
	\hfil
	\subfloat[\label{fig:asbt_5}{$t_6 \prec x_4 \prec t_8$}]
	{\includegraphics[page=5, scale=0.75]{AsBt_twist}}
	\caption{Illustration for the Case~P.\ref*{p1} when $x_4 \prec t_8$ holds.}
	\label{fig:asbt_1_5}
\end{figure}

\par\medskip\noindent\textbf{Details of Case~P.\ref*{p1}}.
Recall that in this case, the order is $[A s_1 \ldots s_8 B t_1 \ldots t_8]$.
Let us consider now every possible position of $x_4$. 
Fig.~\ref{fig:asbt_1_5} shows that a $4$-rainbow is always obtained when $x_4 \prec t_8$. 
Hence, $t_8 \prec x_4$ holds.
Symmetrically, we can obtain that $t_8 \prec x_5$ holds. 
However, we have no knowledge about the relative order of $x_4$ and $x_5$.
In the following, we distinguish two subcases depending on whether $x_4 \prec x_5$ or $x_5 \prec x_4$.
Both subcases are illustrated in Fig.~\ref{fig:asbt_6_7}, which also shows the existence of $4$-rainbows, thus completing this case.

\begin{figure}[t]
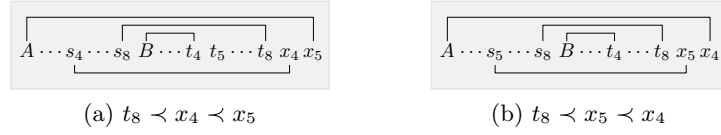

	\centering
	\subfloat[\label{fig:asbt_6}{$t_8 \prec x_4 \prec x_5$}]
	{\includegraphics[page=6, scale=0.75]{AsBt_twist}}
	\hfil
	\subfloat[\label{fig:asbt_7}{$t_8 \prec x_5 \prec x_4$}]
	{\includegraphics[page=7, scale=0.75]{AsBt_twist}}
	\caption{Illustration for the Case~P.\ref*{p1} when $t_8 \prec x_4$ and $t_8 \prec x_5$ hold.}
	\label{fig:asbt_6_7}
\end{figure}

\medskip

\par\medskip\noindent\textbf{Details of Case~P.\ref*{p4}}. 
In the following, we give the detailed proofs of Propositions~\ref{prop:1}--\ref{prop:4} that we omitted in the main part. 

\rephrase{Proposition}{\ref*{prop:1}}{%
Let $w$ be a neighbor of $s_i$ and $t_i$ for $3\le i\le 8$. 
Then, either $s_{i-1} \prec w \prec t_{i+1}$ holds, or $s_{10}\prec w$.
}
\begin{proof} 
Let $z\in\{s_i, t_i\}$ be the neighbor of $w$. 
We prove in the following that for any placement of $w$, such that neither $s_{i-1}\prec w\prec t_{i+1}$ nor $s_{10}\prec w$ hold, there is a $4$-rainbow:

\begin{figure}[h!]
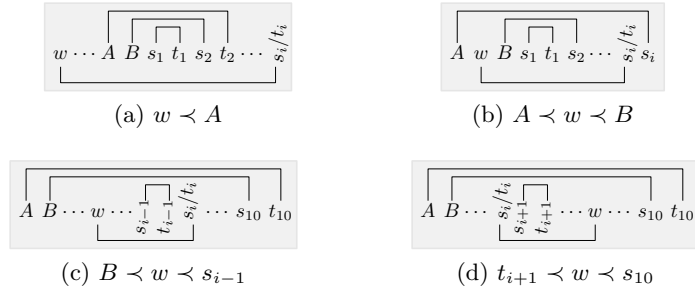

	\centering
	\subfloat[\label{fig:abst_neck_1}{$w \prec A$}]
	{\includegraphics[page=1, scale=0.75]{ABst_necklace}}
	\hfil
	\subfloat[\label{fig:abst_neck_2}{$A \prec w \prec B$}]
	{\includegraphics[page=2, scale=0.75]{ABst_necklace}}
	\hfil\\
	\subfloat[\label{fig:abst_neck_3}{$B\prec w\prec s_{i-1}$}]
	{\includegraphics[page=3, scale=0.75]{ABst_necklace}}
	\hfil
	\subfloat[\label{fig:abst_neck_5}{$t_{i+1}\prec w\prec  s_{10}$}]
	{\includegraphics[page=5, scale=0.75]{ABst_necklace}}
	\caption{Illustrations for the proof of Proposition~\ref{prop:1}.}
	\label{fig:abst_neck}
\end{figure}
\begin{itemize}
	\item[--] if $w\prec A$, then $w\prec A\prec B\prec s_1\prec t_1\prec s_2\prec t_2\prec z$ forms a $4$-rainbow; see Fig.~\ref{fig:abst_neck_1};
	\item[--] if $A\prec w\prec B$, then $A\prec w<B\prec s_1\prec t_1\prec s_2\prec z\prec t_7$ forms a $4$-rainbow; see Fig.~\ref{fig:abst_neck_2};
	\item[--] if $B\prec w\prec s_{i-1}$, then $A\prec B\prec w\prec s_{i-1}\prec t_{i-1}\prec z\prec s_{10}\prec t_{10}$ forms a $4$-rainbow; see Fig.~\ref{fig:abst_neck_3} and
	\item[--] if $t_{i+1}\prec w\prec  s_{10}$, then $A\prec B\prec z\prec s_{T-1}\prec t_{T-1}\prec w\prec s_{10}\prec t_{10}$ forms a $4$-rainbow; see Fig.~\ref{fig:abst_neck_5}.
\end{itemize}
Since each case yields a $4$-rainbow, the proof follows.
\end{proof}

\rephrase{Proposition}{\ref*{prop:2}}{
Let $w$ and $z$ be two vertices that form a $K_4$ with $s_i$ and $t_i$, for $3\le i\le 8$. 
Then, at least one of the following holds: $s_{10}\prec w$ or $s_{10}\prec z$.
}
\begin{proof} 
Since $s_4$, $t_4$, $w$, $z$ form a $K_4$, in any relative ordering of these four vertices, they form a $2$-rainbow. 
By Proposition~\ref{prop:1}, each of $w$ and $z$ is either between $s_{i-1}$ and $t_{i+1}$, or after $s_{10}$. 
But if both of them were between $s_{i-1}$ and $t_{i+1}$, then
the $2$-rainbow by the edges of the $K_4$, along with the two edges $(A, t_{10})$, $(B, s_{10})$ would form a $4$-rainbow; a contradiction.
Hence, $s_{10}\prec w$ or $s_{10}\prec z$ must hold, as desired.
\end{proof}

\rephrase{Proposition}{\ref*{prop:3}}{%
Let $w$, $z$ be neighbors of both $s_i$, $t_i$, for $3\le i\le 8$. 
Then, at most one of $w$ and $z$ is between $s_{i-1}$ and $s_i$ or between $t_i$ and $t_{i-1}$. 
Furthermore, if one of $w$ and $z$ is between $s_{i-1}$ and $s_i$ or between $t_i$ and $t_{i-1}$, then the other is not between $s_i$ and $t_i$.
}
\begin{proof} 
In each of the cases where 
(i) both $w$ and $z$ are between $s_{i-1}$ and $s_i$, 
(ii) both are between $t_i$ and $t_{i-1}$, 
(iii) one is between $s_{i-1}$ and $s_i$, or $t_i$ and $t_{i-1}$, and the other is between $s_i$ and $t_i$, 
the $2$-rainbow formed by the $K_4$ induced by the vertices $w$, $z$, $s_i$ and $t_i$, along with the two edges $(A, t_{10})$ and $(B, s_{10})$ form a $4$-rainbow. 
\end{proof}

\rephrase{Proposition}{\ref*{prop:4}}{
For $4\le i\le 8$, each vertex from the set $\{x_i, y_i, p_i, q_i, u_i, v_i\}$ is between $s_{i-1}$ and $t_{i+1}$.
}
\begin{proof} 
Let $w$ be any vertex from the set $S=\{x_i, y_i, p_i, q_i, u_i, v_i\}$. By Proposition~\ref{prop:1}, it is sufficient to prove that $w$ is not after $s_{10}$. Assume for a contradiction that $s_{10}\prec w$. Observe that for any vertex $w$ from the set $S$, $w$ has an edge (which we call \emph{long}) with exactly one of $A$, $B$, and an edge (which we call \emph{short}) with at least one of $s_i$, $t_i$. On the other hand, consider the four edges $(x_{i-1}, \alpha_{i-1})$, $(\alpha'_{i-1}, \alpha''_{i-1})$, $(y_{i-1}, \beta_{i-1})$ and $(\beta'_{i-1}, \beta''_{i-1})$, each of which creates a $K_4$ with $s_{i-1}$ and $t_{i-1}$. By Proposition~\ref{prop:2}, one vertex from each of these four edges is after $s_{10}$. By the pigeonhole principle, at least two of these vertices are on the same side of $w$. Call them $a$ and $b$, where $a\prec b$. Then, $s_{i-1}\prec t_{i-1}\prec s_{10}\prec t_{10}\prec a\prec b$ form a $3$-rainbow. This together with the long edge (when $w\prec a\prec b$), or the short edge (when $a\prec b\prec w$) form a $4$-rainbow.
\end{proof}

\section{Track Layouts}
\label{sec:tracks}
% !TEX root = paper.tex

A \df{track layout} of a graph $G=(V,E)$ is a partition of its vertices into sequences, called \df{tracks}, such that the vertices in each sequence form an independent set and the edges between each two pairs of tracks form a non-crossing set. Formally, let $\{V_i :~1\le i \le t\}$ be a partition of $V$, such that for every edge $(u, v) \in E$, if $u \in V_i$ and $v \in V_j$ then $i \neq j$. Suppose that $<_i$ is a total order of $V_i$. Then, the ordered set $(V_i, <_i)$ is
called a \df{track} and the partition is called a \df{t-track assignment} of $G$.
An \df{X-crossing} in a track assignment consists of two edges $(u, v)$ and $(x, y)$, such that
$u$ and $x$ are on the same track $V_i$, $v$ and $y$ are on a different track $V_j$ with
$u <_i x$ and $y <_j v$. A \df{track layout} is a track assignment with no X-crossings, and the
\df{track number} is the minimum $k$ such that $G$ has a $k$-track layout.

%Track layouts are connected with the existence of low-volume three-dimensional graph drawings. 
%In particular, every $c$-colourable $t$-track graph with $n$ vertices
%admits a $\Oh(c) \times \Oh(c^2t) \times \Oh(c^4n)$ straight-line drawing with $\Oh(c^7tn)$ volume~\cite{DPW04}.

Track and queue layouts are closely related to each other, as shown by Dujmovi{\'c} et al.~\cite{DPW04}. In particular, every $t$-track graph has a $(t-1)$-queue layout, and every $q$-queue graph has track number at most $4q \cdot 4q^{(2q-1)(4q-1)}$. For the case of graphs with bounded tree-width, a better upper bound on the track number is known. For example, trees have track number $3$~\cite{FLW06}, outerplanar graphs have track number $5$~\cite{DPW04}, series-parallel graphs have track number at most $15$~\cite{GGM03}, and planar 3-trees have track number at most $5415$~\cite{GGM03}. Next we improve the upper bound for the track number of planar 3-trees, utilizing the following relation between acyclic chromatic number of a $q$-queue graph and its track number. Recall that a vertex coloring is \df{acyclic} if there is no bichromatic cycle, that is, every cycle receives at least three colors. 

\begin{lemma}[Dujmovi{\'c}, Morin, Wood~\cite{DMW05}]\label{lem:choratic-track}
Every $q$-queue graph with acyclic chromatic number $c$ has track-number at most $c(2q)^{c-1}$.
\end{lemma}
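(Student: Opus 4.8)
The plan is to read a track layout directly off the given queue layout, using the acyclic colouring only to keep the number of tracks bounded. Fix a $q$-queue layout of $G$: a linear order $\prec$ of $V(G)$ together with an assignment of each edge to one of $q$ queues so that no two independent edges in the same queue are nested in $\prec$. Fix also an acyclic colouring $\phi\colon V(G)\to\{1,\dots,c\}$, and recall two features of acyclicity that I will lean on: $\phi$ is proper (equal-coloured vertices are non-adjacent), and for any two distinct colours $i,j$ the bipartite subgraph $G_{ij}$ induced by the colour classes $i$ and $j$ is a forest, since a bichromatic cycle uses only two colours.

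For the construction, I would first orient each forest $G_{ij}$ by rooting every one of its component trees and directing each edge towards its root; let $\mathrm{par}(v)$ denote the parent of a non-root vertex $v$. Then, for a vertex $v$ with $\phi(v)=i$ and each colour $j\ne i$, define a label $\ell_{ij}(v)$ recording which of the $q$ queues the edge from $v$ to $\mathrm{par}(v)$ lies in, together with the bit that says whether $v\prec\mathrm{par}(v)$; this is one of at most $2q$ values. A root of $G_{ij}$ (or a vertex with no $G_{ij}$-neighbour) is given, say, the label of its first child in $\prec$, or a fixed default if it has no child, so that $\ell_{ij}(v)$ always ranges over a set of size $2q$. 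Now put $v$ on the track indexed by $\bigl(\phi(v),\,(\ell_{ij}(v))_{j\ne\phi(v)}\bigr)$, and order each track by the restriction of $\prec$. There are at most $c\,(2q)^{c-1}$ tracks, so the task reduces to showing that this is a valid track layout.

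Validity has two parts. Each track is independent because two vertices on it share a colour and $\phi$ is proper. For the absence of X-crossings, suppose edges $(u,v)$ and $(x,y)$ formed one: $u,x$ on a track of colour $i$, $v,y$ on a track of colour $j$, and, as the tracks are ordered by $\prec$, we have $u\prec x$ and $y\prec v$ (note $(u,v)$ and $(x,y)$ are then independent, since their endpoints in each of the two colours are distinct). Both edges lie in $G_{ij}$. The crux is to use $\ell_{ij}(u)=\ell_{ij}(x)$ and $\ell_{ji}(v)=\ell_{ji}(y)$ — which hold because the endpoints share tracks — to force $(u,v)$ and $(x,y)$ into a common queue and with the same orientation along $\prec$ (the parent endpoint lying on the same side of the child endpoint for both). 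Granting that, the two edges are independent members of one queue, hence not nested; combined with $u\prec x$ and the shared side bit this yields $v\prec y$, contradicting $y\prec v$.

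The step I expect to be the real obstacle is precisely this last reduction: one must set up the labels so that the label of a vertex actually ``sees'' the edge taking part in the hypothetical X-crossing, so that equal labels genuinely pin both edges to the same queue and orientation. The awkward configurations are when the colour-$i$ endpoint is the child on one of the two edges but the parent on the other (a ``mixed'' orientation), and when an endpoint is a root of its $G_{ij}$-tree; these force one to choose the orientations of the forests $G_{ij}$ with care — for example by exploiting that a forest decomposes into two star forests, or by rooting the $G_{ij}$ consistently across the colour classes — rather than arbitrarily. With the bookkeeping arranged so that matching labels entail a common queue and a common orientation, properness settles independence and the no-two-nested-edges property of a single queue settles the X-crossings, giving the desired $c\,(2q)^{c-1}$-track layout.
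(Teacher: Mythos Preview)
The paper does not give a proof of this lemma: it is quoted verbatim from Dujmovi\'c, Morin, and Wood and then used as a black box to derive Corollary~\ref{thm:sp-track}. So there is no in-paper argument to compare your attempt against.

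As for the attempt itself, the overall architecture is the right one and matches the shape of the bound: tracks indexed by $\bigl(\phi(v),(\ell_{ij}(v))_{j\ne\phi(v)}\bigr)$ with each $\ell_{ij}$ ranging over $2q$ values, ordered by the queue order~$\prec$. You are also right that colouring each edge by its pair (queue, side of the $\prec$-smaller endpoint) gives a valid $(2q,c)$-track layout, since two independent same-queue same-side edges that formed an X-crossing would be nested in~$\prec$. The difficulty is exactly where you put your finger: converting that \emph{edge} $2q$-colouring into a \emph{vertex} label via ``edge to the parent in the rooted forest $G_{ij}$'' does not, in general, let you recover the queue and side of an arbitrary edge from the labels of its endpoints. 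In the mixed configuration you describe --- $v=\mathrm{par}(u)$ but $x=\mathrm{par}(y)$ --- the equalities $\ell_{ij}(u)=\ell_{ij}(x)$ and $\ell_{ji}(v)=\ell_{ji}(y)$ tell you about the edges $(u,v)$, $(x,\mathrm{par}(x))$, $(y,x)$, $(v,\mathrm{par}(v))$, and there is no reason $(x,\mathrm{par}(x))$ or $(v,\mathrm{par}(v))$ should coincide with the offending edges $(x,y)$, $(u,v)$. Your proposed patches (pass to a star-forest decomposition, or ``root consistently'') do not close this gap as stated: splitting each $G_{ij}$ into two star forests doubles the number of per-colour labels and blows the bound past $c(2q)^{c-1}$, and no rooting of a bipartite tree makes \emph{every} edge a child-to-parent edge from the colour-$i$ side. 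So the sketch, as written, stops short of a proof; the ``crux'' paragraph is precisely the place where a genuinely new idea (beyond ``choose the rooting carefully'') is still needed.
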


Since the unique $4$-coloring of a planar 3-tree is acyclic~\cite{FRR01}, combining Lemma~\ref{lem:choratic-track} with Theorem~\ref{thm:upper-bound} we obtain the following result.

\rephrase{Theorem}{\ref*{thm:sp-track}}{
The track number of a planar 3-tree is at most $4000$.
}

}{}

\end{document}